\newtheorem{theorem}{Theorem}[section]
\newtheorem{cor}[theorem]{Corollary} 
\newtheorem{defn}[theorem]{Definition}
\newtheorem{lem}[theorem]{Lemma}
\newtheorem{thm}[theorem]{Theorem}
\title{Variance and interest rate risk \\in unit-linked insurance policies}
\author[1]{David Baños}
\author[1]{Marc Lagunas-Merino}
\author[1]{Salvador Ortiz-Latorre}
\affil[1]{Department of Mathematics, University of Oslo.}
\date{\today}
\begin{document}

\maketitle

\begin{abstract}
One of the risks derived from selling long term policies that any
insurance company has, arises from interest rates. In this paper we
consider a general class of stochastic volatility models written in
forward variance form. We also deal with stochastic interest rates to
obtain the risk-free price for unit-linked life insurance contracts,
as well as providing a perfect hedging strategy by completing the market. We conclude
with a simulation experiment, where we price unit-linked policies using Norwegian mortality rates. In addition we compare prices for the classical Black-Scholes model against the Heston stochastic volatility model with a Vasicek interest rate model. 
\end{abstract}

\textbf{Keywords:} unit-linked policies; pure endowment; term insurance; stochastic volatility models; stochastic interest rates.\\

\textbf{MSC classification:}  60H30; 91G20; 91G30; 91G60\\


\section{Introduction}

A unit-linked insurance policy is a product offered by insurance companies. Such contract specifies an event under which the insured of the contract obtains a fixed amount. Typically, the payoff of such contract is the maximum value between some prescribed quantity, the guarantee, and some quantity depending on the performance of a stock or fund. For instance, if $G$ is some positive constant amount, and $S$ is the value of some equity or stock at the time of expiration of the contract, then a unit-linked contract pays
$$H=\max \{G, f(S)\},$$
where $f$ is some suitable function of $S$. Here, the payoff $H$ is always larger than $G$, hence being $G$ a minimum guaranteed amount that the insured will receive. Naturally, the price of such contract depends on the age of the insured at the moment of entering the contract and the time of expiration, likewise, it also depends on the event that the insured is alive at the time of expiration.

The risk of such contracts depends on the risk of the financial instruments used to hedge the claim $H$, and there are many ways to model it. The most classical one is considering the evolution of $S$ under a Black-Scholes model, this is for instance the case in \cite{BoyleSchwartz77} or \cite{AasePersson2011}, where the authors derive pricing and reserving formulas for unit-linked contracts in such setting. One can also consider a more general class of models. For example, it is empirically known that the driving volatility of $S$ is, in general, not constant. One could then take a stochastic model for the volatility, as it is done in \cite{WaQiWa13}, where the authors carry out pricing and hedging under stochastic volatility. Since there is more randomness in the model, complete hedging is no longer possible, the authors in \cite{WaQiWa13} provide the so-called local risk minimizing strategies.

In this paper instead, we look at the problem from two different perspectives. On the one hand, we also consider stochastic volatility, as market evidence shows. Nonetheless, there are available instruments in the market for hedging against volatility risk, the so-called forward variance swaps. Such products are contracts on the future performance of the volatility of the stock. In such a way, we want to price unit-linked contracts taking into account that the insurance company can trade these instruments, as well. On the other hand, it is known that unit-linked products share similarities with European call options. For example, authors in \cite{BoyleSchwartz77} recognize the payoff of unit-linked products as European call options plus some certain amount. However, European call options have very short maturities, typically between the same day of the contract up to two years, while it is not uncommon to have unit-linked insurance contracts that last for up to 40 years. For this reason, there is an inherent risk in the interest rate driving the intrinsic value of money. In this paper we take such long-term risk into account, as well. Classically, most of the literature about equity-linked policies assume deterministic interest rates. Nevertheless, some research on stochastic interest rates has also been carried. For example, in \cite{BP02} the authors consider stochastic interest rates under the Heath-Jarrow-Morton framework and study different types of premium payments. In addition, a comparison with the classical Black-Scholes model is offered in \cite{BP02}. Also in \cite{BO93}, the Vasicek and Cox-Ingersoll-Ross model is considered for the interest rate. In this paper we consider a general framework including both cases. 

While many results in the literature deal with the construction of risk minimizing strategies in incomplete markets, in this paper instead, inspired by \cite{RomanoTouzi97}, we complete the market by allowing for the possibility to trade other instruments that one can find in the market. On the one hand, we introduce zero-coupon bonds to hedge against interest rate risk and, on the other hand, we introduce variance swaps to hedge against volatility risk.

This paper is organized as follows. First, we introduce in Section \ref{framework} our insurance and economic framework with the specific models for the money account, stock and volatility. Then, in Section \ref{unit-linked Section}, we complete the market by incorporating zero-coupon bonds and variance swaps in the market. We derive the dynamics of the new instruments used to hedge and apply the risk-neutral theory to price insurance contracts subject to the performance of an equity or fund with stochastic interest and volatility. In Section \ref{sec: 4} we take a particular model; the Vasicek model for the interest rate and a Heston model written in forward variance form. We implement the model and do a comparison study with the classical Black-Scholes model in Section \ref{implementation}, where we generate price surfaces under Norwegian mortality rates and different maturities. We conclude Section \ref{implementation} with a Monte-Carlo simulation of the price distributions.


\section{Framework}\label{framework}

The two basic elements needed in order to build a financial model
robust enough to be able to price unit-linked policies, are a financial
market and a group of individuals to write insurance on. We consider a finite time horizon $T>0$ and a given probability space $\left(\Omega,\mathcal{A},\mathbb{P}\right)$ where $\Omega$ is the set of all possible states of the world, $\mathcal{A}$ be a $\sigma$-algebra of subsets of $\Omega$ and $\mathbb{P}$ be a probability measure on $(\Omega,\mathcal{A})$. We model the information flow at each given time with a filtration $\mathcal{F}=\left\{ \mathcal{F}_{t},t\in\left[0,T\right]\right\} $ given by a collection of increasing $\sigma$-algebras, i.e. $\mathcal{F}_{s}\subset\mathcal{F}_{t}\subset\mathcal{A}$ for $t\geq s$. We will also assume that $\mathcal{F}_{0}$ contains all the sets of probability zero and that the filtration is right continuous. We also take $\mathcal{A}=\mathcal{F}_{T}$. The information flow $\mathcal{F}$ comes from two sources; the financial market and the states of the insured that are relevant in the policy. The market information available at time $t$ will be denoted by $\mathcal{G}_t$ and the information regarding to the state of the insured by $\mathcal{H}_t$. We will assume throughout the paper that the $\sigma$-algebras $\mathcal{G}_{t}$
and $\mathcal{H}_{t}$ are independent for all $t$, which implies that the value
of the market assets is independent of the health condition of the insured. We also assume that $\mathcal{F}_{t}=\mathcal{G}_{t}\vee\mathcal{H}_{t}$, for all $t$, where $\mathcal{G}_{t}\vee\mathcal{H}_{t}$ is the $\sigma$-algebra
generated by the union of $\mathcal{G}_{t}$ and $\mathcal{H}_{t}$.
This can be understood as the total amount of information available
in the economy at time $t$, that is the information one can get by recording
the values of market assets and the health state of the insured from
time 0 to time $t$.



\subsection{The market model}

The market information $\mathcal{G}$ will be modeled using the filtration generated by three independent standard Brownian motions, $W_{t}^{0},W_{t}^{1}$ and $W_{t}^{2}$.
These three Brownian motions represent the sources of risk in our model.
We will consider a market formed by assets of two different natures.
A bank account, considered of riskless nature and stock or bond prices, which are of risky nature.

We start by defining the bank account, whose price process is denoted
by $B=\left\{ B_{t}\right\} _{t\in\left[0,T\right]}$, such that $B_{0}=1$.
We will assume the asset evolves according to the following differential equation
\begin{equation}
dB_{t}=r_{t}B_{t}dt,\quad t\in \left[0,T\right],\label{eq: Money_Market_Account_SDE}
\end{equation}
where $r_{t}$ is the instantaneous spot rate and it is assumed to
have integrable trajectories. Actually, we will assume that this rate
evolves under the physical measure $\mathbb{P}$, according to the
following SDE
\begin{equation}
dr_{t}=\mu\left(t,r_{t}\right)dt+\sigma\left(t,r_{t}\right)dW_{t}^{0},\quad r_0 >0, \quad t\in\left[0,T\right],\label{eq: short_rate_SDE}
\end{equation}
where $\mu,\sigma:\left[0,T\right]\times\mathbb{R}\rightarrow\mathbb{R}$
are Borel measurable functions such that, for every $t\in\left[0,T\right]$
and $x\in\mathbb{R}$
\[
\left|\mu\left(t,x\right)\right|+\left|\sigma\left(t,x\right)\right|\leq C\left(1+\left|x\right|\right),
\]
for some positive constant $C$, and such that for every $t\in\left[0,T\right]$
and $x,y\in\mathbb{R}$
\[
\left|\mu\left(t,x\right)-\mu\left(t,y\right)\right|+\left|\sigma\left(t,x\right)-\sigma\left(t,x\right)\right|\leq L\left|x-y\right|,
\]
for some constant $L>0$. We will also assume there exists $\epsilon>0$,
such that $\sigma\left(t,x\right)\geq\epsilon>0$ for every $\left(t,x\right)\in\mathbb{R}_{+}\times\mathbb{R}$.
These conditions are sufficient to guarantee a unique global strong
solution of $\left(\ref{eq: short_rate_SDE}\right)$, weaker conditions
may be imposed, see e.g. \cite[Chapter IX, Theorem 3.5]{Revuz99}.

One of the risky assets will be the stock. We describe the stock price process $S=\left\{ S_{t}\right\} _{t\in\left[0,T\right]}$
by a general mean-reverting stochastic volatility model. Specifically, we will consider the following SDEs
\begin{align}
\frac{dS_{t}}{S_{t}} & =b\left(t,S_{t}\right)dt+a\left(t,S_{t}\right)f\left(\nu_{t}\right)dW_{t}^{1}, \quad S_0 >0,\label{eq: Stock_Price_SDE}\\
d\nu_{t} & =-\kappa\left(\nu_{t}-\bar{\nu}\right)dt+ h\left(\nu_{t}\right) dW_{t}^{2},\quad \nu_0>0, \label{eq: Instantaneous_Variance}
\end{align}
for $t\in\left[0,T\right]$. Here $a,b$ are uniformly
Lipschitz continuous and bounded functions, such that $a\left(t,x\right)>0$ for
all $\left(t,x\right)\in\left[0,T\right]\times\mathbb{R}$. The function $f$ is assumed to be continuous with linear growth and strictly positive. 
We assume that $h$ is a non-negative, linear growth, invertible function such that,
\[\lvert h(x)-h(y)\rvert^2\leq\ell\left(\lvert x-y\rvert\right),\]
for some function $\ell$ defined on $\left(0,\infty\right)$ such that
\[\int_0^\epsilon\frac{dz}{\ell\left(z\right)}=\infty,\quad \text{for any } \epsilon >0.\]
Then, \cite[Chapter IX, Theorem 3.5(ii)]{Revuz99} guarantees the existence of a pathwise unique solution of equation $\left(\ref{eq: Stock_Price_SDE}\right)$. We call $\nu_{t}$, the instantaneous variance. 

Due to the fact that neither $\nu$ nor $r$ are tradable, our market model is highly incomplete. In the forthcoming section, we will complete the market by introducing financial instruments in order to hedge against the risk derived from instantaneous variance and interest rates.

We introduce the num\'eraire, with respect to which we will discount our cashflows

\begin{defn}
\label{def: Stochastic Discount Factor} The (stochastic) discount
factor $D_{t,T}$ between two time intervals $t$ and $T$, $0\leq t\leq T,$
is the amount at time $t$ that is equivalent to one unit of currency
payable at time $T$, and is given by
\begin{equation}
D_{t,T}=\frac{B_{t}}{B_{T}}=\exp\left(-\int_{t}^{T}r_{s}ds\right).\label{eq: Stochastic_Discount_Factor}
\end{equation}
\end{defn}

\subsection{The insurance model}

In what follows, we introduce our insurance model. More specifically, we want to model the insurance information $\mathcal{H}$ as the one generated by a regular Markov chain $X=\{X_t, t\in [0,T]\}$ with finite state space $\mathscr{S}$ which regulates the states of the insured at each time $t\in [0,T]$. For instance, in an endowment insurance, the state $\mathscr{S}=\{\ast,\dag\}$ consists of the two states, $\ast=$"alive" and $\dag=$"deceased". In a disability insurance we have three states, $\mathscr{S}=\{\ast,\diamond,\dag\}$ where $\diamond$ stands for "disabled". Observe that $X$ is right-continuous with left limits and, in particular, $\mathcal{H}$ satisfies the usual conditions.

Introduce the following processes:
$$I_i^X(t) = \begin{cases} 1, \mbox{ if } X_t=i,\\ 0, \mbox{ if } X_t\neq i \end{cases},\quad i\in \mathscr{S},$$
$$N_{ij}^X(t) = \# \{s\in (0,t): X_{t^-} = i, X_t=j \},\quad i,j\in \mathscr{S}, \quad i\neq j.$$
Here, $\#$ denotes the counting measure and $X_{t^-}\triangleq \lim_{\substack{u\to t\\ u<t}}X_u$ the left limit of $X$ at the point $t$. The random variable $I_i^X(t)$ tells us whether the insured is in state $i$ at time $t$ and $N_{ij}^X(t)$ tells us the number of transitions from $i$ to $j$ in the whole period $(0,t)$. 

\begin{defn}[Stochastic cash flow]
A stochastic cash flow is a stochastic process $A=\{A_t\}_{t\geq 0}$ with almost all sample paths with bounded variation.
\end{defn}

More concretely, we will consider cash flows described by an insurance policy entirely determined by its payout functions. We denote by $a_i(t)$, $i\in S$, the sum of payments from the insurer to the insured up to time $t$, given that we know that the insured has always been in state $i$. Moreover, we will denote by $a_{ij}(t)$, $i,j \in S$, $i\neq j$, the payments which are due when the insured switches state from $i$ to $j$ at time $t$. We always assume that these functions are of bounded variation. The cash flows we will consider are entirely described by the policy functions, defined by an insurance policy. Observe that, the policy functions can be stochastic in the case where the payout is linked to a fund modelled by a stochastic process.

\begin{defn}[Policy cash flow]
We consider payout functions $a_i(t)$, $i\in S$ and $a_{ij}(t)$, $i,j\in S$, $i\neq j$ for $t\geq 0$ of bounded variation. The (stochastic) cash flow associated to this insurance is defined by
$$A(t) = \sum_{i\in S} A_i(t) + \sum_{\substack{i,j\in S\\ i\neq j}} A_{ij}(t),$$
where
$$A_i(t) = \int_0^t I_i^X(s) da_i(s), \quad A_{ij}(t) = \int_0^t a_{ij}(s) dN_{ij}^X(s).$$
The quantity $A_i$ corresponds to the accumulated liabilities while the insured is in state $i$ and $A_{ij}$ for the case when the insured switches from $i$ to $j$.
\end{defn}

The value of a stochastic cash flow $A$ at time $t$ will be denoted by $V^+(t,A)$, or simply $V^+(t)$, and is defined as
$$V^+ (t,A)= B_T\int_t^\infty \frac{dA(s)}{B_s},$$
where $B$ is the reference discount factor in \eqref{eq: Money_Market_Account_SDE}. The stochastic integral is a well-defined pathwise Riemann-Stieltjes integral since $A$ is almost surely of bounded variation and $B$ is almost surely continuous. The quantity $V^+(t,A)$ is stochastic since both $B$ and $A$ are stochastic. The prospective reserve of an insurance policy with cash flow $A$ given the information $\mathcal{F}_t$ is then defined as
$$V_\mathcal{F}^+(t,A) = \mathbb{E}^{\mathbb{Q}}[V^+(t,A)|\mathcal{F}_t],$$
where $\mathbb{Q}$ is an equivalent martingale measure.

It turns out, see \cite[Theorem 4.6.3]{Koller}, that one can find explicit expressions when the policy functions $a_i$, $i\in S$, $a_{ij}$, $i,j\in S$, $i\neq j$ and the force of interest are deterministic. Combining the previous result with a conditioning argument allows us to recast the expression for the reserves as the following conditional expectation,
\begin{align}\label{reserve2}
V_i^+(t,A)\triangleq\mathbb{E}^{\mathbb{Q}}\left[\sum_{j\in S} \int_t^\infty \frac{B_t}{B_s}p_{ij}(t,s) da_j(s) + \sum_{\substack{j,k\in S\\k\neq j}} \int_t^\infty \frac{B_t}{B_s} p_{ij}(t,s)\mu_{jk}(s) a_{jk}(s)ds\Bigg| \mathcal{G}_t\right],
\end{align}
where $\mu_{ij}$ are the continuous transition rates associated to the Markov chain $X$ and $p_{ij}(s,t)$ are the transition probabilities from changing from state $i$ at time $s$ to state $j$ at time $t$.

In this paper we will focus on the pricing and hedging of unit-linked pure endowment policies with stochastic volatility and stochastic interest rate.
Since other more general insurances can be reduced to this. For instance, in \eqref{reserve2}, if $a_i$ is of bounded variation and a.e. differentiable with derivative $\dot{a}_i$ then we can look at
$$\mathbb{E}^{\mathbb{Q}}\left[ \frac{B_t}{B_s}\dot{a}_i(s)\Big| \mathcal{G}_t\right] \quad \mbox{and} \quad \mathbb{E}^{\mathbb{Q}}\left[ \frac{B_t}{B_s}a_{ij}(s)\Big| \mathcal{G}_t\right]$$
as contracts with payoff $\dot{a}_i(s)$, respectively $a_{ij}(s)$, with maturity $s\geq t$.


\section{Pricing and hedging of the unit-linked life insurance contract}\label{unit-linked Section}

The aim of this section, is to price and hedge insurance claims linked to the fund $S$. However, we cannot hedge any contingent claim using a portfolio with $S$ only. In the spirit of \cite{RomanoTouzi97} we will complete the market, including the possibility to trade products whose underlying are the forward variance and interest rate, which are indeed actively traded in the market.

\subsection{Completing the market using variance swaps and zero-coupon bonds}\label{subsec: completing}

First, we will introduce a family of equivalent probability measures $\mathbb{Q}\sim\mathbb{P}$ given by
\begin{eqnarray}
\mathbb{Q}\left(A\right) = \mathbb{E}\left[Z_{T}\mathbf{1}_{A}\right],\qquad A\in\mathcal{G}_{T},\label{eq: Q-meas}
\end{eqnarray}
where $Z=\{Z_{t},t\in\left[0,T\right]\}$ is given by
\begin{eqnarray*}
Z_{t} = \mathcal{E}\left(\sum_{i=0}^{2}\int_{0}^{\cdot}\gamma_{s}^{i}dW_{s}^{i}\right)_{t}, \qquad t\in\left[0,T\right],
\end{eqnarray*}
and $\mathcal{G}$-adapted $\gamma^{i},$ for every $i=0,1,2,$ such that $\mathbb{E}\left[Z_{T}\right]=1.$ Here, $\mathcal{E}\left(M\right)_t=\exp\left(M_{t}-\frac{1}{2}\left[M,M\right]_{t}\right)$, denotes the stochastic exponential for a continuous semimartingale $M$.

The following processes are Brownian motions under $\mathbb{Q}$
\begin{align}
W_{t}^{\mathbb{Q},i} & =W_{t}^{i}-\int_{0}^{t}\gamma_{s}^{i}ds, \qquad i=0,1,2. \label{eq: Q-BM_0}
\end{align}
Note that, not all probability measures given in \eqref{eq: Q-meas} are risk-neutral in our market model. In particular, $\gamma^{1}$ is determined by the fact that $S$ is a tradable asset and takes the form
\begin{eqnarray*}
\gamma^{1}_t\triangleq \frac{r_{t}-b\left(t,S_t\right)}{a\left(t,S_t\right)f\left(\nu_{t}\right)}.
\end{eqnarray*}
All probability measures in \eqref{eq: Q-meas} fixing $\gamma^{1}$ are valid risk-neutral measures. In particular, choosing $\gamma^{0}=\gamma^{2}=0$ is one of them. From now on, we denote by $\mathbb{Q}^{0}$ this choice, that is,

\begin{eqnarray}\label{eq: Q_0}
\frac{d\mathbb{Q}^{0}}{d\mathbb{P}} =  \mathcal{E}\left(\int_{0}^{\cdot}\gamma_{s}^{1}dW_{s}^{1}\right)_{T}.
\end{eqnarray}
Now we are in a position to introduce the financial instruments whose valuation will be done under $\mathbb{Q}^{0}$. One of the most traded asset in interest rate markets are zero-coupon bonds.

\begin{defn}
\label{def: Zero-Coupon Bond} A $T$-maturity zero-coupon bond is
a contract that guarantees its holder the payment of one unit of currency
at time $T$, with no intermediate payments. The contract value at
time $0\leq t\leq T$ is denoted by $P_{t,T}$ and by definition $P_{T,T}=1,$
for all $T.$
\end{defn}

A risk-neutral price of a zero-coupon bond in our framework is given in the following definition.
\begin{defn}
\label{def: Zero_Coupon_Bond_Price}The price of a zero-coupon bond,
$P_{t,T}$ is given by
\begin{equation}
P_{t,T}=\mathbb{E}^{\mathbb{Q}^{0}}\left[D_{t,T}\mid\mathcal{G}_{t}\right]=\mathbb{E}^{\mathbb{Q}^{0}}\left[\frac{B_{t}}{B_{T}}\mid\mathcal{G}_{t}\right]=\mathbb{E}^{\mathbb{Q}^{0}}\left[\exp\left\{ -\int_{t}^{T}r_{s}ds\right\} \mid\mathcal{G}_{t}\right],\label{eq: Zero_Coupon_Price}
\end{equation}
where $\mathbb{Q}^{0}$ is the equivalent martingale measure given by \eqref{eq: Q_0}. See \cite[Definition 4.1. in Section 4.3.1 and Section 5.1]{Fili09} for definitions.
\end{defn}

The next classical result gives a connection between the bond price in \eqref{eq: Zero_Coupon_Price} and the solution to a linear PDE, see e.g. 
\cite{Fili09}. 
\begin{lem}\label{ZC-bond_PDE}
Assume
that for any $T>0$, $F_{T}\in\mathcal{C}^{1,2}\left(\left[0,T\right]\times\mathbb{R}\right)$
is a solution to the boundary problem on $\left[0,T\right]\times\mathbb{R}$
given by
\begin{align*}
\partial_{t}F_{T}\left(t,x\right)+\mu\left(t,x\right)\partial_{x}F_{T}\left(t,x\right)
+\frac{1}{2}\sigma^{2}\left(t,x\right)\partial_{x}^{2}F_{T}\left(t,x\right)-xF_{T}\left(t,x\right) & =0,\\
F_{T}\left(T,x\right) & =1.
\end{align*}
Then 
\[
M_t\triangleq F_{T}\left(t,r_{t}\right)e^{-\int_{0}^{t}r_{u}du},\qquad t\in\left[0,T\right],
\]
is a local martingale. If in addition either:
\end{lem}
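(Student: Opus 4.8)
The plan is to apply Itô's formula to $M_t=F_T(t,r_t)\,e^{-\int_0^t r_u\,du}$ and to check that the boundary problem satisfied by $F_T$ is exactly what forces the finite-variation (drift) part of $dM_t$ to vanish, leaving only a stochastic integral against $W^0$, which is automatically a local martingale. Throughout I would work under $\mathbb{Q}^0$, noting that since the Girsanov kernel is chosen with $\gamma^0=0$ in \eqref{eq: Q_0}, we have $W^{\mathbb{Q}^0,0}=W^0$ by \eqref{eq: Q-BM_0}; hence the dynamics \eqref{eq: short_rate_SDE} of $r$ are unchanged under $\mathbb{Q}^0$ and the computation below is valid verbatim.

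First I would set $Y_t\triangleq e^{-\int_0^t r_u\,du}$, which is a continuous process of finite variation with $dY_t=-r_t Y_t\,dt$ and no martingale part. Applying Itô's formula to $F_T(t,r_t)$ using \eqref{eq: short_rate_SDE} gives
\[
dF_T(t,r_t)=\Big(\partial_t F_T+\mu\,\partial_x F_T+\tfrac12\sigma^2\,\partial_x^2 F_T\Big)(t,r_t)\,dt+\sigma(t,r_t)\,\partial_x F_T(t,r_t)\,dW_t^0 .
\]
Since $Y$ has finite variation, the product rule carries no cross-variation term, so
\[
dM_t=Y_t\,dF_T(t,r_t)+F_T(t,r_t)\,dY_t=Y_t\Big(\partial_t F_T+\mu\,\partial_x F_T+\tfrac12\sigma^2\,\partial_x^2 F_T-r_t F_T\Big)(t,r_t)\,dt+Y_t\,\sigma(t,r_t)\,\partial_x F_T(t,r_t)\,dW_t^0 .
\]
Evaluating the bracket at $x=r_t$ and invoking the boundary problem in the statement — whose zeroth-order coefficient is precisely $-x$, i.e. $-r_t$ along the trajectory — the drift term is identically zero. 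Hence $dM_t=Y_t\,\sigma(t,r_t)\,\partial_x F_T(t,r_t)\,dW_t^0$, an Itô integral against a $\mathbb{Q}^0$-Brownian motion, which establishes that $M$ is a local martingale.

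The local-martingale part is thus a mechanical application of Itô's formula; the genuinely delicate step — and the one I expect the omitted ``if in addition either'' hypotheses to address — is upgrading the local martingale to a true martingale. This is where an integrability condition such as $\mathbb{E}^{\mathbb{Q}^0}\big[\int_0^T Y_s^2\,\sigma^2(s,r_s)\,(\partial_x F_T(s,r_s))^2\,ds\big]<\infty$, or a polynomial-growth bound on $\partial_x F_T$ combined with moment estimates for $r$ (available from the linear-growth assumptions on $\mu,\sigma$), would be used to control the stochastic integrand. Once $M$ is a genuine $\mathbb{Q}^0$-martingale, the terminal condition $F_T(T,\cdot)=1$ gives $M_T=e^{-\int_0^T r_u\,du}$, and the martingale property together with Definition \ref{def: Zero_Coupon_Bond_Price} yields the Feynman–Kac representation $F_T(t,r_t)=\mathbb{E}^{\mathbb{Q}^0}\big[e^{-\int_t^T r_u\,du}\mid\mathcal{G}_t\big]=P_{t,T}$.
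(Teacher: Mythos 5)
Your proof is correct and follows exactly the standard Feynman--Kac argument that the paper itself does not write out but delegates to the cited reference \cite{Fili09}: Itô's formula applied to $F_T(t,r_t)e^{-\int_0^t r_u\,du}$, with the finite-variation discount factor contributing no bracket term, the PDE annihilating the drift so that $dM_t$ reduces to a stochastic integral against $W^0$, and hypotheses (a) (square-integrability of the integrand, which is precisely your displayed condition and makes the Itô integral a true martingale) or (b) (boundedness) upgrading the local martingale so the terminal condition $F_T(T,\cdot)=1$ yields \eqref{eq: pseudo- ZCbond_Price}. You also correctly handle the one subtlety the lemma leaves implicit: since $\gamma^0=0$ in \eqref{eq: Q_0}, the short-rate dynamics \eqref{eq: short_rate_SDE} are unchanged under $\mathbb{Q}^0$, so the computation is valid under the pricing measure verbatim.
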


\begin{itemize}
\item[(a)] $\mathbb{E}^{\mathbb{Q}^{0}}\left[\int_{0}^{T}\left|\partial_{x}F_{T}\left(t,r_{t}\right)e^{-\int_{0}^{t}r_{u}du}\sigma\left(t,r_{t}\right)\right|^{2}dt\right]<\infty,$
or
\item[(b)] $M$ is uniformly bounded,
\end{itemize}
then $M$ is a martingale, and 
\begin{equation}
F_{T}\left(t,r_{t}\right)=\mathbb{E}^{\mathbb{Q}^{0}}\left[e^{-\int_{t}^{T}r_{u}du}\mid\mathcal{G}_{t}\right],\qquad t\in\left[0,T\right].\label{eq: pseudo- ZCbond_Price}
\end{equation}

The dynamics of the zero-coupon bond $P$ in terms of the function $F_{T}$ are given by 
\begin{align}
dP_{t,T} & =\mathcal{L}_{P}\left(F_{T}\right)\left(t,r_{t}\right)dt+\partial_{x}F_{T}\left(t,r_{t}\right)\sigma\left(t,r_{t}\right)dW_{t}^{0},\label{eq: ZC_P-dynamics}
\end{align}
where $\mathcal{L}_{P}\triangleq\partial_{t}+\mu\left(t,x\right)\partial_{x}+\frac{1}{2}\sigma^{2}\left(t,x\right)\partial_{x}^{2}-x$.


We turn now to the definition of the forward variance process. The forward variance $\xi_{t,u}$, for $0\leq t\leq u$, is by definition the conditional expectation of the future instantaneous variance, see e.g. \cite{Aly14}, that is,
\begin{eqnarray}
\xi_{t,u}\triangleq\mathbb{E}^{\mathbb{Q}^{0}}\left[\nu_{u}\mid\mathcal{G}_{t}\right],\quad 0\leq t\leq u,\label{FwdVar_vs_InstaVar}
\end{eqnarray}
where $\mathbb{Q}^{0}$ is the risk-neutral pricing measure defined in \eqref{eq: Q_0}.
Following \cite{BergomiGuyon2012}, one can easily rewrite the general stochastic volatility model, given by equations $\left(\ref{eq: Stock_Price_SDE}\right)$ and $\left(\ref{eq: Instantaneous_Variance}\right)$ in forward variance form. This is achieved by taking conditional expectation of equation $\left(\ref{eq: Instantaneous_Variance}\right)$, which yields

\begin{eqnarray*}
d\mathbb{E}^{\mathbb{Q}^{0}}\left[\nu_u\mid\mathcal{G}_t\right] &=& -\kappa\left(\mathbb{E}^{\mathbb{Q}^{0}}\left[
\nu_u\mid\mathcal{G}_t\right]-\bar{\nu}\right)du,\quad u>t,
\end{eqnarray*}
Solving the previous linear ODE, by integrating on $\left[t,u\right]$, we have
\begin{eqnarray}
\xi_{t,u} &=& \bar{\nu} + e^{-\kappa\left(u-t\right)}\left(\nu_t-\bar{\nu}\right)\label{FwdVar_InstaVar_RawRelationship}.
\end{eqnarray}
There are two things to notice at this point. The first is that, by construction $\nu_t=\xi_{t,t}$, for every $t\in\left[0,T\right]$. 
Second is that differentiating the previous equation, we can characterize the dynamics with respect to $t$ for the forward variance as follows
\begin{eqnarray}
d\xi_{t,u} &=& e^{-\kappa\left(u-t\right)}h\left(\nu_t\right)dW_t^2.\label{Fwd_Var_dynamics}
\end{eqnarray}
Solving equation $\left(\ref{FwdVar_InstaVar_RawRelationship}\right)$ for $\nu_t$, yields
\begin{eqnarray*}
\nu_t &=& \bar{\nu} + e^{\kappa \left(u-t\right)}\left(\xi_{t,u}-\bar{\nu} \right)\triangleq\psi\left(t,u,\xi_{t,u}\right).\\
\end{eqnarray*}
Usually, the dynamics of the forward variance in any forward variance model, are given through the following SDE,
\begin{eqnarray}
d\xi_{t,u} & =\lambda\left(t,u,\xi_{t,u}\right)dW_{t}^{2}.\label{eq: Fwd_Variance_SDE}
\end{eqnarray}
As a consequence of the previous result and in our case, the function $\lambda$ in equation $\left(\ref{eq: Fwd_Variance_SDE}\right)$ is fully characterized by
\begin{eqnarray}
\lambda\left(t,u,\xi_{t,u}\right) &\triangleq& e^{-\kappa\left(u-t\right)}\left(h\circ \psi\right)\left(t,u,\xi_{t,u}\right). \label{eq: Lambda}
\end{eqnarray}

Note that any finite-dimensional Markovian
stochastic volatility model can be rewritten in forward variance form. Since we will only be interested in the fixed case $u=T$, we will
drop the dependence on $T$ for $\xi_{t,T}$ and write instead $\xi_{t}=\xi_{t,T}$.

We will show how to form a portfolio with a perfect hedge.
The financial instruments needed in order to build a riskless portfolio
are the underlying asset, a variance swap and the zero-coupon bond.

From now on, we will assume that the function $F_T$, solution to the PDE in Lemma \ref{ZC-bond_PDE} is invertible in the space variable, for every $t\in\left[0,T\right],$ e.g. this is the case if $r_t$, $t\in\left[0,T\right]$ is given by the Vasicek model. Introduce the notation,
\begin{eqnarray}\label{eq: G_partialF}
G_{T}\left(t,x\right)\triangleq\partial_{x}F_{T}\left(t,x\right),
\end{eqnarray} 
then $\partial_{x}F_{T}\left(t,r_{t}\right)=G_{T}\left(t,F_{T}^{-1}\left(t,P_{t,T}\right)\right)$, where $r_{t} = F_{T}^{-1}\left(t,P_{t,T}\right).$

\subsection{Pricing and hedging in the completed market}

Let $\Pi=\left\{ \Pi_{t}\right\} _{t\in\left[0,T\right]}$ be a stochastic
process representing the value of a portfolio consisting of a long
position on an option with price $V_{t}$, where $V_{t}=V\left(t,S_{t},\xi_{t},P_{t,T}\right)$, and respective short positions on $\Delta_{t}$ units of the underlying asset, $\Sigma_{t}$ units of a variance swap, and $\Psi_{t}$ units
of a zero-coupon bond. Therefore, we can characterize the process $\Pi$
as 
\begin{equation}
\Pi_{t}=V\left(t,S_{t},\xi_{t},P_{t,T}\right)-\Delta_{t}S_{t}-\Sigma_{t}\xi_{t}-\Psi_{t}P_{t,T},\quad t\in\left[0,T\right].\label{eq: Portfolio}
\end{equation}

\begin{defn}
We say that the portfolio $\Pi$ is self-financing if, and only if,
\[
d\Pi_{t}=dV\left(t,S_{t},\xi_{t},P_{t,T}\right)-\Delta_{t}dS_{t}-\Sigma_{t}d\xi_{t}-\Psi_{t}dP_{t,T},
\]
for every $t\in\left[0,T\right].$
\end{defn}

\begin{defn}
\label{def: Self-financing Portfolio}We say that the portfolio $\Pi$ is perfectly hedged, or risk-neutral, if it is self-financing and
\[
\Pi_{T}=0.
\]
\end{defn}

From now on, and throughout the rest of the paper, we will only differentiate
between time derivative $\partial_{t}V$ and space derivatives
$\partial_{x}V$, $\partial_{y}V$, $\partial_{z}V$,
to write the partial derivatives of $V=V\left(t,x,y,z\right)$. We
will also denote second order spatial partial derivatives of $V$
with respect to $S_{t}$, $\xi_{t}$, $P_{t,T}$, respectively by
$\partial_{x}^{2}V$, $\partial_{y}^{2}V$, $\partial_{z}^{2}V$
and the second order crossed derivatives as $\partial_{x}\partial_{y}V$,
$\partial_{x}\partial_{z}V$, $\partial_{y}\partial_{z}V$. 
In order to simplify the notation in the following results, we shall define
\[\Xi_{T}\left(t,x\right)\triangleq G_{T}\left(t,F_{T}^{-1}\left(t,x\right)\right)\cdot\sigma\left(t,F_{T}^{-1}\left(t,x\right)\right),\]
where recall that $G_{T}$ is given in \eqref{eq: G_partialF}. 
\begin{thm}
Let $\Pi$ be a portfolio defined as in $\left(\ref{eq: Portfolio}\right)$,
and assume $V\in C^{1,2}\left(\left[0,T\right]\times\mathbb{R}^{3}\right)$.
If $\Pi$ is a replicating portfolio, then $V$ fulfils,
\begin{align}
\partial_{t}V+\frac{1}{2}\left(x^{2}a\left(t,x\right)^{2}f\left(\psi\left(t,T,y\right)\right)^{2}\partial_{x}^{2}V+\lambda\left(t,T,y\right)^{2}\partial_{y}^{2}V+\Xi_{T}^{2}\left(t,z\right)\partial_{z}^{2}V\right)\label{eq: Pricing_PDE}\\ 
-r_{t}\left(V-x\partial_{x}V-y\partial_{y}V-z\partial_{z}V\right) & =0,\nonumber 
\end{align}
for every $t\in\left[0,T\right]$ and
\begin{equation}
V\left(T,S_{T},\xi_{T},P_{T,T}\right)=\max\left(S_{T},G\right).\label{eq: PDE_Terminal_Condition}
\end{equation}
\end{thm}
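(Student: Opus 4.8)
The plan is to apply Itô's formula to $V(t,S_t,\xi_t,P_{t,T})$, substitute the result into the self-financing dynamics of $\Pi$, choose $\Delta_t,\Sigma_t,\Psi_t$ so as to annihilate every stochastic integral, and then invoke the no-arbitrage principle to match the remaining riskless drift with $r_t\Pi_t$.

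First I would collect the dynamics of the three state variables. Taking $u=T$ in \eqref{eq: Fwd_Variance_SDE} gives $d\xi_t=\lambda(t,T,\xi_t)\,dW_t^2$, the stock follows \eqref{eq: Stock_Price_SDE}, and the bond follows \eqref{eq: ZC_P-dynamics}. Two rewritings are needed to express the diffusion coefficients in the state variables $(x,y,z)=(S_t,\xi_t,P_{t,T})$: the forward-variance relation yields $\nu_t=\psi(t,T,\xi_t)$, so $f(\nu_t)=f(\psi(t,T,y))$; and the standing invertibility of $F_T$ gives $r_t=F_T^{-1}(t,P_{t,T})$, whence $\partial_xF_T(t,r_t)\sigma(t,r_t)=\Xi_T(t,z)$. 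Because $W^0,W^1,W^2$ are independent — a property the drift-only change of measure \eqref{eq: Q-BM_0} preserves — the cross-variations $d[S,\xi]_t$, $d[S,P]_t$, $d[\xi,P]_t$ all vanish, so Itô's formula produces exactly the three pure second-order terms $\tfrac12 x^2a(t,x)^2f(\psi(t,T,y))^2\partial_x^2V$, $\tfrac12\lambda(t,T,y)^2\partial_y^2V$ and $\tfrac12\Xi_T^2(t,z)\partial_z^2V$ of \eqref{eq: Pricing_PDE}, with no mixed derivatives.

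Substituting this expansion into $d\Pi_t=dV-\Delta_t\,dS_t-\Sigma_t\,d\xi_t-\Psi_t\,dP_{t,T}$ and setting $\Delta_t=\partial_xV$, $\Sigma_t=\partial_yV$, $\Psi_t=\partial_zV$ makes each hedge term equal to the corresponding first-order term in $dV$; upon subtraction both the stochastic integrals and the asset drifts cancel identically, whatever their precise form, leaving the locally riskless $d\Pi_t=\big(\partial_tV+\tfrac12 x^2a^2f(\psi(t,T,y))^2\partial_x^2V+\tfrac12\lambda^2\partial_y^2V+\tfrac12\Xi_T^2\partial_z^2V\big)\,dt$. Since $\Pi$ is self-financing and now riskless, no-arbitrage forces $d\Pi_t=r_t\Pi_t\,dt$, where by \eqref{eq: Portfolio} and the chosen ratios $\Pi_t=V-x\partial_xV-y\partial_yV-z\partial_zV$; equating the two drifts gives \eqref{eq: Pricing_PDE}. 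The terminal condition \eqref{eq: PDE_Terminal_Condition} is not an output of the hedging argument but the definitional payoff of the contract: at maturity the option equals its guarantee $\max(S_T,G)$, with $P_{T,T}=1$ by Definition \ref{def: Zero-Coupon Bond}.

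The steps are more bookkeeping than concept. The one demanding most care is reducing the bond contribution to a function of the tradable variable $z=P_{t,T}$, which is exactly where invertibility of $F_T$ enters, turning $\partial_xF_T(t,r_t)\sigma(t,r_t)$ into $\Xi_T(t,P_{t,T})$ via $G_T$ in \eqref{eq: G_partialF}. Two further points deserve to be stated explicitly: the disappearance of the mixed second-order terms, resting on the mutual independence of $W^0,W^1,W^2$, and the no-arbitrage closure $d\Pi_t=r_t\Pi_t\,dt$, the sole economic input, which combined with $\Pi_T=0$ also yields $\Pi_t\equiv0$ and hence replication of the claim.
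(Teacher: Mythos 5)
Your proposal is correct and follows essentially the same route as the paper's proof: an It\^o expansion of $V(t,S_t,\xi_t,P_{t,T})$ in which the cross-variations vanish by independence of $W^0,W^1,W^2$, the hedge choice $\Delta_t=\partial_x V$, $\Sigma_t=\partial_y V$, $\Psi_t=\partial_z V$ to kill the stochastic terms, and the closure $d\Pi_t=r_t\Pi_t\,dt$ followed by rearrangement into \eqref{eq: Pricing_PDE}. Your additional bookkeeping --- expressing the coefficients through $\nu_t=\psi(t,T,\xi_t)$ and $r_t=F_T^{-1}(t,P_{t,T})$ so that the bond diffusion becomes $\Xi_T(t,z)$ --- merely makes explicit what the paper encodes in its standing notation.
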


\begin{proof}
It is important to notice that we will use the notation $V_{t}$ to refer to the process $V\left(t,S_t,\xi_t,P_{t,T}\right)$, and similarly for the partial derivatives. For instance, $\partial_x V_t=\partial_x V\left(t,S_t,\xi_t,P_{t,T}\right)$. By means of Itô's lemma, we are able to write the change in our portfolio $\left\{V_t\right\}_{t\in\left[0,T\right]}$ as follows
\begin{align*}
d\Pi_{t} & =\partial_{t}V_{t}dt+\partial_{x}V_{t}dS_{t}+\partial_{y}V_{t}d\xi_{t}+\partial_{z}V_{t}dP_{t,T}\\
 & \qquad+\frac{1}{2}\partial_{x}^{2}V_{t}d\left[ S,S\right]_{t}+\frac{1}{2}\partial_{y}^{2}V_{t}d\left[\xi,\xi\right]_{t}+\frac{1}{2}\partial_{z}^{2}V_{t}d\left[ P,P\right]_{t}\\
 & \qquad+\partial_{x}\partial_{y}V_{t}d\left[ S,\xi\right]_{t}+\partial_{x}\partial_{z}V_{t}d\left[ S,P\right]_{t}+\partial_{y}\partial_{z}V_{t}d\left[\xi,P\right]_{t}\\
 & \qquad-\Delta_{t}dS\left(t\right)-\Sigma_{t}d\xi_{t}-\Psi_{t}dP_{t,T}\\
 & =\partial_{t}V_{t}dt+\left\{ \partial_{x}V_{t}-\Delta_{t}\right\} dS_{t}+\left\{ \partial_{y}V_{t}-\Sigma_{t}\right\} d\xi_{t}+\left\{ \partial_{z}V_{t}-\Psi_{t}\right\} dP_{t,T}\\
 & \qquad+\frac{1}{2}\partial_{x}^{2}V_{t}d\left[ S,S\right]_{t}+\frac{1}{2}\partial_{y}^{2}V_{t}d\left[\xi,\xi\right]_{t}+\frac{1}{2}\partial_{z}^{2}V_{t}d\left[ P,P\right]_{t}\\
 & \qquad+\partial_{x}\partial_{y}V_{t}d\left[ S,\xi\right]_{t}+\partial_{x}\partial_{z}V_{t}d\left[ S,P\right]_{t}+\partial_{y}\partial_{z}V_{t}d\left[\xi,P\right]_{t}.
\end{align*}
Using the dynamics for $dS_{t}$, $d\xi_{t}$, $dP_{t,T}$ and the quadratic covariations, given by
\begin{align*}
d\left[ S,S\right]_{t} & =S_{t}^{2}a\left(t,S_{t}\right)^{2}f\left(\psi\left(t,T,\xi_t\right)\right)^{2}dt,\\
d\left[\xi,\xi\right]_{t} & =\lambda\left(t,T,\xi_{t}\right)^{2}dt,\\
d\left[ P,P\right]_{t} & =\Xi_{T}^{2}\left(t,P_{t,T}\right)dt,\\
d\left[ S,\xi\right]_{t} & = 0,\\
d\left[ S,P\right]_{t} & = 0,\\
d\left[\xi,P\right]_{t} & = 0,
\end{align*}
we obtain
\begin{align*}
d\Pi_{t} & =\partial_{t}V_{t}dt+\left\{ \partial_{x}V_{t}-\Delta_{t}\right\} dS_{t}+\left\{ \partial_{y}V_{t}-\Sigma_{t}\right\} d\xi_{t}+\left\{ \partial_{z}V_{t}-\Psi_{t}\right\} dP_{t,T}\\
 & \qquad+\frac{1}{2}S_{t}^{2}a\left(t,S_{t}\right)^{2}f\left(\psi\left(t,T,\xi_t\right)\right)^{2}\partial_{x}^{2}V_{t}dt+\frac{1}{2}\lambda\left(t,T,\xi_{t}\right)^{2}\partial_{y}^{2}V_{t}dt+\frac{1}{2}\Xi_{T}^{2}\left(t,P_{t,T}\right)\partial_{z}^{2}V_{t}dt\\
 & =\Biggl\{\partial_{t}V_{t}+\frac{1}{2}\left(S_{t}^{2}a\left(t,S_{t}\right)^{2}f\left(\psi\left(t,T,\xi_t\right)\right)^{2}\partial_{x}^{2}V_{t}+\lambda\left(t,T,\xi_{t}\right)^{2}\partial_{y}^{2}V_{t}+\Xi_{T}^{2}\left(t,P_{t,T}\right)\partial_{z}^{2}V_{t}\right)\Biggr\} dt\\
 & \qquad+\left\{ \partial_{x}V_{t}-\Delta_{t}\right\} dS_{t}+\left\{ \partial_{y}V_{t}-\Sigma_{t}\right\} d\xi_{t}+\left\{ \partial_{z}V_{t}-\Psi_{t}\right\} dP_{t,T}.
\end{align*}

Now, in order to make the portfolio instantaneously risk-free, we must
impose that the return on our portfolio equals the risk-free rate
$r_{t}$, i.e. $d\Pi_{t}=r_{t}\Pi_{t}dt=r_{t}\left(V_{t}-\Delta_{t}S_{t}-\Sigma_{t}\xi_{t}-\Psi_{t}P_{t,T}\right)dt$,
and force the coefficients in front of $dS_{t}$, $d\xi_{t}$ and
$dP_{t,T}$ to be zero, i.e.
\begin{align*}
\Delta_{t} & =\partial_{x}V_{t},\\
\Sigma_{t} & =\partial_{y}V_{t},\\
\Psi_{t} & =\partial_{z}V_{t}.
\end{align*}
This implies that
\begin{align*}
d\Pi_{t} & =\Biggl\{ \partial_{t}V_{t}+\frac{1}{2}\left(S_{t}^{2}a\left(t,S_{t}\right)^{2}f\left(\psi\left(t,T,\xi_t\right)\right)^{2}\partial_{x}^{2}V_{t}+\lambda\left(t,T,\xi_{t}\right)^{2}\partial_{y}^{2}V_{t}+\Xi_{T}^{2}\left(t,P_{t,T}\right)\partial_{z}^{2}V_{t}\right)\Biggr\} dt.\\
\end{align*}
Therefore, rearranging the terms in the previous expression and taking
into account that we have imposed $\Delta_{t}=\partial_{x}V_{t}$,
$\Sigma_{t}=\partial_{y}V_{t}$, $\Psi_{t}=\partial_{z}V_{t}$, we
have the PDE for the unit-linked product, ending the proof.
\end{proof}

From now on, in order to ease the notation, we will define the
differential operator in $\left(\ref{eq: Pricing_PDE}\right)$ as
\begin{align}\label{eq: L_V}
\mathcal{L}_{V} & \triangleq\partial_{t}+\frac{1}{2}\left(x^{2}a\left(t,x\right)^{2}f\left(\psi\left(t,T,y\right)\right)^{2}\partial_{x}^{2}+\lambda\left(t,T,y\right)^{2}\partial_{y}^{2}+\Xi_{T}^{2}\left(t,z\right)\partial_{z}^{2}\right)\\
 &\qquad -r_{t}\left(1-x\partial_{x}-y\partial_{y}-z\partial_{z}\right). \nonumber
\end{align}
We will now prove that the discounted option price is a martingale.
\begin{thm}
\label{thm: Generalized_UL-Pricing_Thm}Let $V$
be the solution to the PDE given by equation $\left(\ref{eq: Pricing_PDE}\right)$
with terminal condition $\left(\ref{eq: PDE_Terminal_Condition}\right)$.
Then
\[
B_{t}^{-1}V\left(t,S_{t},\xi_{t},P_{t,T}\right)=\mathbb{E}^{\mathbb{Q}}\left[B_{T}^{-1}V\left(T,S_{T},\xi_{T},P_{T,T}\right)\mid\mathcal{G}_{t}\right],
\]
where $\mathbb{Q}$ indicates the risk-neutral measure. 
\end{thm}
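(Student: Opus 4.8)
The plan is to prove the claim as a Feynman--Kac--type verification: I will show that the discounted price process $\tilde{V}_t \triangleq B_t^{-1} V(t,S_t,\xi_t,P_{t,T})$ is a $\mathbb{Q}$-martingale, after which the asserted identity is simply the martingale property written between times $t$ and $T$, combined with the terminal condition \eqref{eq: PDE_Terminal_Condition} and the normalization $P_{T,T}=1$. The essential input is the identification of the $\mathbb{Q}$-dynamics of the three tradable assets: under the risk-neutral measure $\mathbb{Q}$ their discounted values are local martingales, so $S$, $\xi$ and $P_{\cdot,T}$ each acquire a $\mathbb{Q}$-drift equal to $r_t$ times itself, while their diffusion coefficients, and hence the quadratic (co)variations computed in the previous theorem, are left unchanged by the Girsanov shift. (Note that forcing $B^{-1}\xi$ to be a $\mathbb{Q}$-martingale activates a nonzero variance-risk premium $\gamma^2$, so here $\mathbb{Q}$ need not coincide with $\mathbb{Q}^0$ from \eqref{eq: Q_0}; this is precisely what produces the $-r_t y\partial_y V$ term in \eqref{eq: Pricing_PDE}.)

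First I would record the $\mathbb{Q}$-dynamics $dS_t = r_t S_t\,dt + S_t a(t,S_t) f(\nu_t)\,dW_t^{\mathbb{Q},1}$, $d\xi_t = r_t \xi_t\,dt + \lambda(t,T,\xi_t)\,dW_t^{\mathbb{Q},2}$, and $dP_{t,T} = r_t P_{t,T}\,dt + \Xi_T(t,P_{t,T})\,dW_t^{\mathbb{Q},0}$, recalling that $\partial_x F_T(t,r_t)\sigma(t,r_t)=\Xi_T(t,P_{t,T})$ by \eqref{eq: G_partialF} and the invertibility of $F_T$, and that $\nu_t=\psi(t,T,\xi_t)$. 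Then I would apply the product rule to $\tilde V_t = B_t^{-1}V_t$, using $dB_t^{-1} = -r_t B_t^{-1}\,dt$ (so no bracket term appears, $B^{-1}$ being of finite variation) together with Itô's formula for $V(t,S_t,\xi_t,P_{t,T})$. Because $W^{\mathbb{Q},0},W^{\mathbb{Q},1},W^{\mathbb{Q},2}$ are independent, all mixed quadratic covariations vanish exactly as in the previous proof, so the only second-order contributions are the three terms carrying $\partial_x^2 V$, $\partial_y^2 V$, $\partial_z^2 V$.

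Collecting the $dt$ terms then yields the drift $B_t^{-1}\big\{\partial_t V + r_t S_t \partial_x V + r_t \xi_t \partial_y V + r_t P_{t,T}\partial_z V + \tfrac{1}{2}\big(S_t^2 a^2 f(\nu_t)^2\partial_x^2 V + \lambda^2 \partial_y^2 V + \Xi_T^2 \partial_z^2 V\big) - r_t V\big\}$, with all arguments at $(t,S_t,\xi_t,P_{t,T})$. This is exactly $B_t^{-1}\,\mathcal{L}_V V$ for $\mathcal{L}_V$ as in \eqref{eq: L_V}, which vanishes since $V$ solves \eqref{eq: Pricing_PDE}. Hence $\tilde V$ has zero drift, i.e. $d\tilde V_t = B_t^{-1}\big(\partial_x V\, S_t a f(\nu_t)\,dW_t^{\mathbb{Q},1} + \partial_y V\, \lambda\,dW_t^{\mathbb{Q},2} + \partial_z V\, \Xi_T\,dW_t^{\mathbb{Q},0}\big)$, so $\tilde V$ is a $\mathbb{Q}$-local martingale.

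The remaining, and main, obstacle is to upgrade this local martingale to a genuine martingale, which is what licenses taking conditional expectations. Under the standing assumption $V\in C^{1,2}$ I would do this by imposing (or verifying) square-integrability of the three stochastic integrands above, the direct analogue of the integrability condition (a) stated after Lemma \ref{ZC-bond_PDE}, or, when applicable, via a uniform-integrability/boundedness argument in the spirit of (b). Granting the true martingale property gives $\tilde V_t = \mathbb{E}^{\mathbb{Q}}[\tilde V_T\mid\mathcal{G}_t]$; multiplying through by $B_t$ and substituting $\tilde V_T = B_T^{-1}V(T,S_T,\xi_T,P_{T,T})$ produces the stated identity. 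Finally, conditioning on the market filtration $\mathcal{G}_t$ rather than on $\mathcal{F}_t$ is justified because the payoff and coefficients are $\mathcal{G}$-adapted and $\mathcal{G}$ is independent of the insurance information $\mathcal{H}$.
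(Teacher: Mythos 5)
Your proposal is correct and follows essentially the same route as the paper's own proof: Itô's formula applied to $B_t^{-1}V(t,S_t,\xi_t,P_{t,T})$ under the risk-neutral dynamics, identification of the drift with $B_t^{-1}\mathcal{L}_V V$, and the martingale property between $t$ and $T$; your side remark that $\gamma^2\neq 0$ forces $\mathbb{Q}\neq\mathbb{Q}^0$ matches the paper's equation \eqref{eq: FwdVarSwap_Q-Martingale Condition}. If anything, you are more careful than the paper at one point: the paper takes conditional expectations of the stochastic integrals without justifying that they are true (not merely local) martingales, whereas you explicitly flag the needed square-integrability or uniform-boundedness condition in the spirit of conditions (a)/(b) of Lemma \ref{ZC-bond_PDE}.
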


\begin{proof}
We start by imposing that the discounted price process, $\tilde{S}_{t}=B_{t}^{-1}S_{t},$
the discounted variance swap $\tilde{\xi_{t}},$ and the discounted
zero-coupon bond price $\tilde{P}_{t,T}$ are $\mathbb{Q}-$martingales,
where $dB_{t}=r_{t}B_{t}dt$ and $dB_{t}^{-1}=-r_{t}B_{t}^{-1}dt$. To do so, we will also make use of the relationship between the Brownian motions and their $\mathbb{Q}$-measure counterparts, given by $\left(\ref{eq: Q-BM_0}\right)$. 
\begin{align*}
d\tilde{S}_{t} & =dB_{t}^{-1}S_{t}+B_{t}^{-1}dS_{t}\\
 & =-r_{t}B_{t}^{-1}S_{t}dt+B_{t}^{-1}\left[b\left(t,S_{t}\right)S_{t}dt+a\left(t,S_{t}\right)f\left(\psi\left(t,T,\xi_t\right)\right)S_{t}dW_{t}^{1}\right]\\
 & =\tilde{S_{t}}\left[\left(b\left(t,S_{t}\right)-r_{t}\right)dt+a\left(t,S_{t}\right)f\left(\psi\left(t,T,\xi_t\right)\right)\left[dW_{t}^{\mathbb{Q},1}
+\gamma_{t}^{1}dt\right]\right]\\
 & =\tilde{S_{t}}\left[b\left(t,S_{t}\right)-r_{t}+a\left(t,S_{t}\right)f\left(\psi\left(t,T,\xi_t\right)\right)\gamma_{t}^{1}\right]dt
+\tilde{S_{t}}a\left(t,S_{t}\right)f\left(\nu_{t}\right)dW_{t}^{\mathbb{Q},1}.
\end{align*}
Now, the discounted price process $\tilde{S_{t}}$ is a $\mathbb{Q}-$martingale
if, and only if
\begin{equation}
\gamma_{t}^{1} = \frac{r_{t}-b\left(t,S_{t}\right)}{a\left(t,S_{t}\right)f\left(\psi\left(t,T,\xi_t\right)\right)}.\label{eq: Price_Q-Martingale_Condition}
\end{equation}
We do the same for the discounted forward variance process,
\begin{align*}
d\tilde{\xi}_{t} & =dB_{t}^{-1}\xi_{t}+B_{t}^{-1}d\xi_{t}\\
 & =-r_{t}B_{t}^{-1}\xi_{t}dt+B_{t}^{-1}\lambda\left(t,T,\xi_{t}\right)dW_{t}^{2}\\
 & =B_{t}^{-1}\left[\lambda\left(t,T,\xi_{t}\right)\gamma_{t}^{2}-r_{t}\xi_{t}\right]dt
 +B_{t}^{-1}\lambda\left(t,T,\xi_{t}\right)dW_{t}^{\mathbb{Q},2}.
\end{align*}
Therefore, the discounted variance swap is a $\mathbb{Q}-$martingale
if, and only if
\begin{equation}
\gamma_{t}^{2}=\frac{r_{t}\xi_{t}}{\lambda\left(t,T,\xi_{t}\right)}.\label{eq: FwdVarSwap_Q-Martingale Condition}
\end{equation}
Finally, we impose that the discounted zero-coupon bond price process
is a $\mathbb{Q}$-martingale analogously
\begin{align*}
d\tilde{P}_{t,T} & =dB_{t}^{-1}P_{t,T}+B_{t}^{-1}dP_{t,T}\\
 & =-r_{t}B_{t}^{-1}P_{t,T}dt\\
 & \qquad+B_{t}^{-1}\biggl[\partial_{t}F_{T}\left(t,r_{t}\right)+\mu\left(t,r_{t}\right)\partial_{x}F_{T}\left(t,r_{t}\right) \\
 & \qquad +\frac{1}{2}\sigma^{2}\left(t,r_{t}\right)\partial_{x}^{2}F_{T}\left(t,r_{t}\right)-r_{t}F_{T}\left(t,r_{t}\right)\biggr]dt
 +B_{t}^{-1}\partial_{x}F_{T}\left(t,r_{t}\right)\sigma\left(t,r_{t}\right)dW_{t}^{0}\\
 & =B_{t}^{-1}\biggl[\partial_{t}F_{T}\left(t,r_{t}\right)+\left(\mu\left(t,r_{t}\right)+\sigma\left(t,r_{t}\right)\gamma_{t}^{0}\right)\partial_{x}F_{T}\left(t,r_{t}\right) \\
 & \qquad +\frac{1}{2}\sigma^{2}\left(t,r_{t}\right)\partial_{x}^{2}F_{T}\left(t,r_{t}\right)-r_{t}\left(F_{T}\left(t,r_{t}\right)+P_{t,T}\right)\biggr]dt\\
 & \qquad+B_{t}^{-1}\Xi_{T}^{2}\left(t,P_{t,T}\right)dW_{t}^{\mathbb{Q},0}.
\end{align*}
Therefore, the discounted zero-coupon bond is a $\mathbb{Q}$-martingale
if, and only if
\begin{eqnarray}\label{eq: ZCbond_Q-Martingale_Condition}
\frac{1}{F_{T}\left(t,r_{t}\right)+P_{t,T}}\biggl[\partial_{t}F_{T}\left(t,r_{t}\right)+\left(\mu\left(t,r_{t}\right)+\sigma\left(t,r_{t}\right)\gamma_{t}^{0}\right)\partial_{x}F_{T}\left(t,r_{t}\right)
+\frac{1}{2}\Xi_{T}^{2}\left(t,P_{t,T}\right)\biggr]=r_t.
\end{eqnarray}
Now, we are able to characterize $\gamma^{i}$, for all $i\in\left\{ 0,1,2\right\} ,$
by solving the linear system given by equations $\left(\ref{eq: Price_Q-Martingale_Condition}\right)$,
$\left(\ref{eq: FwdVarSwap_Q-Martingale Condition}\right)$ and $\left(\ref{eq: ZCbond_Q-Martingale_Condition}\right)$.

Therefore, we will apply Itô's lemma to the discounted price of the
option,
\[
d\left[B_{t}^{-1}V\left(t,S_{t},\xi_{t},P_{t,T}\right)\right]=dB_{t}^{-1}V\left(t,S_{t},\xi_{t},P_{t,T}\right)+B_{t}^{-1}dV\left(t,S_{t},\xi_{t},P_{t,T}\right).
\]
In order to relax the notation we will drop the dependencies of $V$,
allowing us to rewrite the previous expression as

\begin{align*}
d\left[B_{t}^{-1}V_{t}\right] & =dB_{t}^{-1}V_{t}+B_{t}^{-1}dV_{t}\\
 & =-r_{t}B_{t}^{-1}V_{t}dt
 +B_{t}^{-1}\left[\partial_{t}V_{t}dt+\partial_{x}V_{t}dS_{t}+\partial_{y}V_{t}d\xi_{t}+\partial_{z}V_{t}dP_{t,T}\right]\\
 & \qquad+B_{t}^{-1}\left[\frac{1}{2}\partial_{x}^{2}V_{t}d\left[ S,S\right]_{t}+\frac{1}{2}\partial_{y}^{2}V_{t}d\left[\xi,\xi\right]_{t}+\frac{1}{2}\partial_{z}^{2}V_{t}d\left[ P,P\right]_{t}\right]\\
 & \qquad+B_{t}^{-1}\left[\partial_{x}\partial_{y}V_{t}d\left[ S,\xi\right]_{t}+\partial_{x}\partial_{z}V_{t}d\left[ S,P\right]_{t}+\partial_{y}\partial_{z}V_{t}d\left[\xi,P\right]_{t}\right].\\
\end{align*}
Furthermore,
\begin{align*}
d\left[B_{t}^{-1}V_{t}\right] & =B_{t}^{-1}\left(\partial_{t}V_{t}-r_{t}V_{t}\right)dt\\
 & \quad+B_{t}^{-1}\left(\partial_{x}V_{t}\left[b\left(t,S_{t}\right)S_{t}dt+S_{t}a\left(t,S_{t}\right)f\left(\psi\left(t,T,\xi_t\right)\right)dW_{t}^{1}\right]+\partial_{y}V_{t}\left[\lambda\left(t,T,\xi_{t}\right)dW_{t}^{2}\right]\right)\\
 & \quad+B_{t}^{-1}\partial_{z}V_{t}\left[\mathcal{L}_{P}\left(F_{T}\left(t,r_{t}\right)\right)dt+\Xi_{T}\left(t,P_{t,T}\right)dW_{t}^{0}\right]\\
 & \quad+B_{t}^{-1}\left[\frac{1}{2}S_{t}^{2}a\left(t,S_{t}\right)^{2}f\left(\psi\left(t,T,\xi_t\right)\right)^{2}\partial_{x}^{2}V_{t}+\frac{1}{2}\lambda\left(t,T,\xi_{t}\right)^{2}\partial_{y}^{2}V_{t}+\frac{1}{2}\Xi_T^{2}\left(t,P_{t,T}\right)\partial_{z}^{2}V_{t}\right]dt\\
 & =B_{t}^{-1}\biggl[\partial_{t}V_{t}-r_{t}V_{t}+b\left(t,S_{t}\right)S_{t}\partial_{x}V_{t}+\mathcal{L}_{P}\left(F_{T}\left(t,r_{t}\right)\right)\partial_{z}V_{t} \\
 & \quad\qquad+\frac{1}{2}\left(S_{t}^{2}a\left(t,S_{t}\right)^{2}f\left(\psi\left(t,T,\xi_t\right)\right)^{2}\partial_{x}^{2}V_{t}+\lambda\left(t,T,\xi_{t}\right)^{2}\partial_{y}^{2}V_{t}+\Xi_{T}^{2}\left(t,P_{t,T}\right)\partial_{z}^{2}V_{t}\right)\biggr]dt\\
 & \quad+B_{t}^{-1}\left[S_{t}f\left(\psi\left(t,T,\xi_t\right)\right)\partial_{x}V_{t}dW_{t}^{1}+\lambda\left(t,T,\xi_{t}\right)\partial_{y}V_{t}dW_{t}^{2}+\Xi_{T}\left(t,P_{t,T}\right)\partial_{z}V_{t}dW_{t}^{0}\right].
\end{align*}

If we replace the Brownian motions under the $\mathbb{P}$-measure
by the ones under the $\mathbb{Q}$-measure given by equation $\left(\ref{eq: Q-BM_0}\right)$,
we can rewrite the previous expression as follows
\begin{align*}
& d\left[B_{t}^{-1}V_{t}\right]= \\
& \quad =B_{t}^{-1}\biggl[\partial_{t}V_{t}-r_{t}V_{t}+b\left(t,S_{t}\right)S_{t}\partial_{x}V_{t}+\mathcal{L}_{P}\left(F_{T}\left(t,r_{t}\right)\right)\partial_{z}V_{t} \\
 & \qquad\qquad+\frac{1}{2}\left(S_{t}^{2}a\left(t,S_{t}\right)^{2}f\left(\psi\left(t,T,\xi_t\right)\right)^{2}\partial_{x}^{2}V_{t}+\lambda\left(t,T,\xi_{t}\right)^{2}\partial_{y}^{2}V_{t}+\Xi_{T}^{2}\left(t,P_{t,T}\right)\partial_{z}^{2}V_{t}\right)\biggr]dt\\
 & \qquad+B_{t}^{-1}S_{t}a\left(t,S_{t}\right)f\left(\psi\left(t,T,\xi_t\right)\right)\partial_{x}V_{t}\left[dW_{t}^{\mathbb{Q},1}+\gamma_{t}^{1}dt\right]\\
 & \qquad+B_{t}^{-1}\lambda\left(t,T,\xi_{t}\right)\partial_{y}V_{t}\left[dW_{t}^{\mathbb{Q},2}+\gamma_{t}^{2}dt\right]\\
 & \qquad+B_{t}^{-1}\Xi_{T}\left(t,P_{t,T}\right)\partial_{z}V_{t}\left[dW_{t}^{\mathbb{Q},0}+\gamma_{t}^{0}dt\right]\\
 & =B_{t}^{-1}\biggl[\partial_{t}V_{t}-r_{t}V_{t}+S_{t}\left[b\left(t,S_{t}\right)+\gamma_{t}^{1}a\left(t,S_{t}\right)f\left(\psi\left(t,T,\xi_t\right)\right)\right]\partial_{x}V_{t} \\
 & \qquad\qquad+\gamma_{t}^{2}\lambda\left(t,T,\xi_{t}\right)\partial_{y}V_{t}
 +\left[\mathcal{L}_{P}\left(F_{T}\left(t,r_{t}\right)\right)+\gamma_{t}^{0}\Xi_{T}\left(t,P_{t,T}\right)\right]\partial_{z}V_{t}\\
 & \qquad\qquad+\frac{1}{2}\left(S_{t}^{2}a\left(t,S_{t}\right)^{2}f\left(\psi\left(t,T,\xi_t\right)\right)^{2}\partial_{x}^{2}V_{t}+\lambda\left(t,T,\xi_{t}\right)^{2}\partial_{y}^{2}V_{t}+\Xi_{T}^{2}\left(t,P_{t,T}\right)\partial_{z}^{2}V_{t}\right)\biggr]dt\\
 & \qquad+B_{t}^{-1}\left[S_{t}a\left(t,S_{t}\right)f\left(\psi\left(t,T,\xi_t\right)\right)\partial_{x}V_{t}dW_{t}^{\mathbb{Q},1}+\lambda\left(t,T,\xi_{t}\right)\partial_{y}V_{t}dW_{t}^{\mathbb{Q},2}+\Xi_{T}\left(t,P_{t,T}\right))\partial_{z}V_{t}dW_{t}^{\mathbb{Q},0}\right].
\end{align*}
Applying equations $\left(\ref{eq: Price_Q-Martingale_Condition}\right)$,
$\left(\ref{eq: FwdVarSwap_Q-Martingale Condition}\right)$, $\left(\ref{eq: ZCbond_Q-Martingale_Condition}\right)$
and reorganizing the terms in the previous equation, we have
\begin{align*}
 & d\left[B_{t}^{-1}V_{t}\right]=
 B_{t}^{-1}\biggl[\partial_{t}V_{t}+r_{t}\left(S_{t}\partial_{x}V_{t}+\xi_{t}\partial_{y}V_{t}+P_{t,T}\partial_{z}V_{t}-V_{t}\right)\\
 & \qquad\qquad+\frac{1}{2}\left(S_{t}^{2}a\left(t,S_{t}\right)^{2}f\left(\psi\left(t,T,\xi_t\right)\right)^{2}\partial_{x}^{2}V_{t}+\lambda\left(t,T,\xi_{t}\right)^{2}\partial_{y}^{2}V_{t}+\Xi_{T}^{2}\left(t,P_{t,T}\right)\partial_{z}^{2}V_{t}\right)\biggr]dt\\
 & \qquad+B_{t}^{-1}\left[S_{t}a\left(t,S_{t}\right)f\left(\psi\left(t,T,\xi_t\right)\right)\partial_{x}V_{t}dW_{t}^{\mathbb{Q},1}+\lambda\left(t,T,\xi_{t}\right)\partial_{y}V_{t}dW_{t}^{\mathbb{Q},2}+\Xi_{T}\left(t,P_{t,T}\right)\partial_{z}V_{t}dW_{t}^{\mathbb{Q},0}\right].
\end{align*}
Now, noticing that the $dt$ term in the previous equation is the differential operator $\eqref{eq: L_V}$ applied to $V$, we can write the following
\begin{align*}
d\left[B_{t}^{-1}V\left(t,S_{t},\xi_{t},P_{t,T}\right)\right] & =B_{t}^{-1}\mathcal{L}_{V}V\left(t,S_{t},\xi_{t},P_{t,T}\right)dt\\
 & \qquad+B_{t}^{-1}S_{t}a\left(t,S_{t}\right)f\left(\psi\left(t,T,\xi_t\right)\right)\partial_{x}V\left(t,S_{t},\xi_{t},P_{t,T}\right)dW_{t}^{\mathbb{Q},1}\\
 & \qquad+B_{t}^{-1}\lambda\left(t,u,\xi_{t}\right)\partial_{y}V\left(t,S_{t},\xi_{t},P_{t,T}\right)dW_{t}^{\mathbb{Q},2}\\
 & \qquad+B_{t}^{-1}\Xi_{T}\left(t,P_{t,T}\right)\partial_{z}V\left(t,S_{t},\xi_{t},P_{t,T}\right)dW_{t}^{\mathbb{Q},0}.
\end{align*}
Next, integrating on the interval $\left[s,t\right],$ with $s\leq t,$
we can write the previous equation in integral form as
\begin{align*}
B_{t}^{-1}V\left(t,S_{t},\xi_{t},P_{t,T}\right) & =V\left(s,S_{s},\xi_{s},P_{s,T}\right)+\int_{s}^{t}B_{\tau}^{-1}\mathcal{L}_{V}V\left(\tau,S_{\tau},\xi_{\tau},P_{\tau,T}\right)d\tau\\
 & \qquad+\int_{s}^{t}B_{\tau}^{-1}S_{\tau}a\left(\tau,S_{\tau}\right)f\left(\psi\left(\tau,T,\xi_\tau\right)\right)\partial_{x}V\left(\tau,S_{\tau},\xi_{\tau},P_{\tau,T}\right)dW_{\tau}^{\mathbb{Q},1}\\
 & \qquad+\int_{s}^{t}B_{\tau}^{-1}\lambda\left(\tau,u,\xi_{\tau}\right)\partial_{y}V\left(\tau,S_{\tau},\xi_{\tau},P_{\tau,T}\right)dW_{\tau}^{\mathbb{Q},2}\\
 & \qquad+\int_{s}^{t}B_{\tau}^{-1}\Xi_{T}\left(t,P_{t,T}\right)\partial_{z}V\left(\tau,S_{\tau},\xi_{\tau},P_{\tau,T}\right)dW_{\tau}^{\mathbb{Q},0}.
\end{align*}
Taking the conditional expectation with respect to the risk neutral
measure, we have that
\[
\mathbb{E}^{\mathbb{Q}}\left[B_{t}^{-1}V\left(t,S_{t},\xi_{t},P_{t,T}\right)\mid\mathcal{G}_{s}\right]=V_{s}+\mathbb{E}^{\mathbb{Q}}\left[\int_{s}^{t}B_{\tau}^{-1}\mathcal{L}_{V}V\left(\tau,S_{\tau},\xi_{\tau},P_{\tau,T}\right)d\tau\mid\mathcal{G}_{s}\right].
\]
Notice that the previous expression is a martingale if, and only if,
$\mathcal{L}_{V}V\left(t,S_{t},\xi_{t},P_{t,T}\right)\equiv0,$ for
all $t\in\left[0,T\right].$ 
\end{proof}


\section{The Vasicek model and Heston model written in
forward variance}\label{sec: 4}

This section is devoted to providing the reader with a particular
model. We will assume that the evolution
of the short-term rate is given by a Vasicek model and consider a
Heston model for the risky asset written in forward variance
form.

Let us consider the following SDE for the short-term rate given by
the Vasicek model.

\begin{equation}
dr_{t}=k\left(\theta-r_{t}\right)dt+\sigma dW_{t}^{0},\quad r_0>0,\quad t\in\left[0,T\right],\label{eq: Vasicek_Model}
\end{equation}
and the Heston model for the risky asset, given by
\begin{align}
dS_{t} & =\mu_{t}S_{t}dt+S_{t}\sqrt{\nu_{t}}dW_{t}^{1},\quad S_0>0,\quad t\in\left[0,T\right],\label{eq: Heston_Price}\\
d\nu_{t} & =-\kappa\left(\nu_{t}-\bar{\nu}\right)dt+\eta\sqrt{\nu_{t}}dW_{t}^{2},\quad \nu_0>0, \quad t\in\left[0,T\right].\label{eq: Heston_instantaneous_variance}
\end{align}
It is well known that the SDE $\left(\ref{eq: Vasicek_Model}\right)$ admits the following closed expression. 
\begin{align*}
r_{T} & =e^{-k\left(T-t\right)}r_{t}+\theta\left(1-e^{-k\left(T-t\right)}\right)+\sigma\int_{t}^{T}e^{-k\left(T-s\right)}dW_{s}^{0}.
\end{align*}
Now, we know that $r_{T}$, conditional on $\mathcal{G}_{t}$, is normally
distributed with mean and variance
\begin{align*}
\mathbb{E}\left[r_{T}\mid\mathcal{G}_{t}\right] & =e^{-k\left(T-t\right)}r_{t}+\theta\left(1-e^{-k\left(T-t\right)}\right),\\
\text{Var}\left[r_{T}\mid\mathcal{G}_{t}\right] & =\frac{\sigma^{2}}{2k}\left(1-e^{-2k\left(T-t\right)}\right).
\end{align*}

One can show, see, e.g. \cite{MusielaMarek05} that the price of the zero-coupon bond under the dynamics given in \eqref{eq: Vasicek_Model} is
\begin{align*}
P_{t,T} & = A\left(t,T\right)e^{-B\left(t,T\right)r_{t}},
\end{align*}
where $B\left(t,T\right)\triangleq\frac{1}{k}\left(1-e^{-k\left(T-t\right)}\right)$ and $A\left(t,T\right)\triangleq\exp\left(\left(\theta-\frac{\sigma^{2}}{2k^2}\right)\left(B\left(t,T\right)+t-T\right)-\frac{\sigma^{2}}{4k}B\left(t,T\right)^{2}\right).$
If we now apply Itô's Lemma to $f\left(t,r_{t}\right)=A\left(t,T\right)e^{-B\left(t,T\right)r_{t}}$,
we have
\begin{align*}
dP_{t,T} & =\partial_{t}f\left(t,r_{t}\right)dt+\partial_{r}f\left(t,r_{t}\right)dr_{t}+\frac{1}{2}\partial_{rr}^{2}f\left(t,r_{t}\right)d\left[ r,r\right]_{t}\\
 & =\partial_{t}P_{t,T}dt-A\left(t,T\right)B\left(t,T\right)e^{-B\left(t,T\right)r_{t}}dr_{t}+\frac{1}{2}A\left(t,T\right)B\left(t,T\right)^{2}e^{-B\left(t,T\right)r_{t}}d\left[ r,r\right]_{t}\\
 & =\partial_{t}P_{t,T}dt+P_{t,T}\left(-B\left(t,T\right)dr_{t}+\frac{1}{2}B\left(t,T\right)^{2}d\left[ r,r\right]_{t}\right).
\end{align*}
Replacing the term $dr_{t}$ in the previous equation by its SDE $\left(\ref{eq: Vasicek_Model}\right)$, we have
\begin{align}
\frac{dP_{t,T}}{P_{t,T}} & =-\left(B\left(t,T\right)k\left(\theta-r_{t}\right)-\frac{1}{2}B\left(t,T\right)^{2}\sigma^{2}\right)dt-\sigma B\left(t,T\right)dW_{t}^{0}.\label{eq: Vasicek_ZCbond_Pdynamics}
\end{align}

The forward variance in this case has the following dynamics
\begin{equation}
d\xi_{t,u}=\eta e^{-\kappa\left(u-t\right)}\sqrt{\xi_{t,t}}dW_{t}^{2}.\label{eq: Forward_Variance_SDE_DifferentialForm}
\end{equation}
The Heston model, as any Markovian model, can be rewritten
in forward variance form by means of equations $\left(\ref{eq: Heston_Price}\right)$
and $\left(\ref{eq: Forward_Variance_SDE_DifferentialForm}\right).$ 
The following corollary gives the specific risk-neutral measure for the Vasicek-Heston model, that will be useful for simulation purposes in the next section.
\begin{cor}
\label{cor: Vasicek-Heston_Pricing} 
The risk-neutral measure under the Vasicek-Heston model is given by the measure in \eqref{eq: Q-meas}  with
\begin{eqnarray*}
\gamma_{t}^{0} & =  &\frac{1}{\Theta\left(t,\nu_{t}\right)}
\Big[\eta \sqrt{\nu_{t}}
\Big(2\left(-1+B\left(t,T\right)k\right)r_t + B\left(t,T\right)
\left(B\left(t,T\right)\sigma^{2}-2k\theta\right)\Big)\Big],\\
\gamma_{t}^{1} & = & \frac{-1}{\Theta\left(t,\nu_{t}\right)}
\Big[2B\left(t,T\right)\sigma\eta\left(\mu_t-r_t\right)\Big],\\
\gamma_{t}^{2} & = & \frac{1}{\Theta\left(t,\nu_{t}\right)}
\Big[2B\left(t,T\right)e^{\kappa\left(T-t\right)}r_t\sigma\xi_t\Big],\\
\end{eqnarray*}
where
\[\Theta\left(t,x\right) \triangleq 2B\left(t,T\right)\sigma\eta\sqrt{x}.\]
\end{cor}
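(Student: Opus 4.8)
The plan is to observe that the three Radon--Nikodym exponents $\gamma^0,\gamma^1,\gamma^2$ are already pinned down, in the general model, by the three martingale conditions obtained in the proof of Theorem~\ref{thm: Generalized_UL-Pricing_Thm}: equation \eqref{eq: Price_Q-Martingale_Condition} for $\gamma^1$, equation \eqref{eq: FwdVarSwap_Q-Martingale Condition} for $\gamma^2$, and equation \eqref{eq: ZCbond_Q-Martingale_Condition} for $\gamma^0$. The key structural remark is that this ``system'' is in fact decoupled: each condition contains exactly one of the three unknowns, so there is nothing to invert and each $\gamma^i$ is read off from a single scalar equation. Thus the whole proof reduces to substituting the Vasicek--Heston specifications \eqref{eq: Vasicek_Model}, \eqref{eq: Heston_Price}, \eqref{eq: Heston_instantaneous_variance} into these three identities and simplifying.

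For $\gamma^1$ and $\gamma^2$ I would first match the Heston dynamics to the general coefficient functions. Comparing \eqref{eq: Heston_Price} with \eqref{eq: Stock_Price_SDE} gives $b(t,S_t)=\mu_t$, $a\equiv 1$ and $f(x)=\sqrt{x}$, and since the fixed choice $u=T$ forces $\psi(t,T,\xi_t)=\nu_t$ we get $f(\psi(t,T,\xi_t))=\sqrt{\nu_t}$; substituting into \eqref{eq: Price_Q-Martingale_Condition} yields $\gamma^1_t=(r_t-\mu_t)/\sqrt{\nu_t}$. Likewise, reading the forward--variance volatility off \eqref{eq: Forward_Variance_SDE_DifferentialForm} gives $\lambda(t,T,\xi_t)=\eta e^{-\kappa(T-t)}\sqrt{\nu_t}$ (using $\xi_{t,t}=\nu_t$), and \eqref{eq: FwdVarSwap_Q-Martingale Condition} then gives $\gamma^2_t=r_t\xi_t e^{\kappa(T-t)}/(\eta\sqrt{\nu_t})$. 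Multiplying each expression top and bottom by the appropriate factor, namely $2B(t,T)\sigma\eta$ and $2B(t,T)\sigma$ respectively, rewrites them over the common quantity $\Theta(t,\nu_t)=2B(t,T)\sigma\eta\sqrt{\nu_t}$ and reproduces the stated forms; this normalization is purely cosmetic.

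The only genuine computation is $\gamma^0$. Here I would not use $F_T$ abstractly but instead the explicit Vasicek bond dynamics \eqref{eq: Vasicek_ZCbond_Pdynamics}, which already encode $\partial_x F_T$ through the factor $-\sigma B(t,T)$. Performing the measure change via \eqref{eq: Q-BM_0}, i.e. replacing $dW^0_t$ by $dW^{\mathbb{Q},0}_t+\gamma^0_t\,dt$, turns the drift of $dP_{t,T}/P_{t,T}$ into
\[
-\,B(t,T)k(\theta-r_t)+\tfrac12 B(t,T)^2\sigma^2-\sigma B(t,T)\gamma^0_t .
\]
Imposing that the discounted bond $B^{-1}P$ be a $\mathbb{Q}$--martingale is equivalent to setting this drift equal to $r_t$, a single linear equation in $\gamma^0_t$. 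Solving it and, as before, clearing denominators against $\Theta(t,\nu_t)$ gives the displayed expression, with numerator $\eta\sqrt{\nu_t}\big(2(B(t,T)k-1)r_t+B(t,T)(B(t,T)\sigma^2-2k\theta)\big)$.

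I expect the main obstacle to be purely bookkeeping in this last step: keeping the signs straight in the Girsanov shift and correctly collecting the $r_t$, $B$, $B^2$ and $\theta$ terms so that the equation for $\gamma^0_t$ collapses to the stated numerator. Everything else is a direct substitution, and the appearance of $\Theta$ as a common denominator in all three formulas should be presented as a chosen normalization rather than as the outcome of inverting a genuinely coupled system.
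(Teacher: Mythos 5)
Your proposal is correct and follows essentially the same route as the paper: the paper's proof likewise imposes that the discounted stock, forward variance and zero-coupon bond be $\mathbb{Q}$-martingales, obtaining the same three (indeed decoupled) scalar conditions, with $\gamma^0$ extracted from the explicit Vasicek bond dynamics \eqref{eq: Vasicek_ZCbond_Pdynamics} exactly as you do. Your observations that the ``linear system'' is trivially decoupled and that $\Theta$ is merely a cosmetic common denominator are accurate refinements of the paper's phrasing, and all three of your resulting expressions agree with the stated formulas after clearing denominators.
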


\begin{proof}
We will proceed similarly as in Theorem $\ref{thm: Generalized_UL-Pricing_Thm}$. We have
to impose that the discounted price process, $\tilde{S}_{t},$ the
discounted variance swap $\tilde{\xi_{t}},$ and the discounted zero-coupon
bond price $\tilde{P}_{t,T}$ are $\mathbb{Q}-$martingales.
\begin{align*}
d\tilde{S}_{t} & =dB_{t}^{-1}S_{t}+B_{t}^{-1}dS_{t}\\
 & =-r_{t}B_{t}^{-1}S_{t}dt+B_{t}^{-1}\left[\mu_{t}S_{t}dt+S_{t}\sqrt{\nu_{t}}dW_{t}^{1}\right]\\
 & =\tilde{S}_{t}\left(\left[\mu_{t}-r_{t}+\sqrt{\nu_{t}}\gamma_{t}^{1}\right]dt+\sqrt{\nu_{t}}dW_{t}^{\mathbb{Q},1}\right),
\end{align*}
now the discounted price process $\tilde{S}_{t}$ is a $\mathbb{Q}-$martingale
if and only if
\begin{equation}
\gamma_{t}^{1}=\frac{r_{t}-\mu_{t}}{\sqrt{\nu_{t}}}.\label{eq: Price_Q-Martingale_Condition-1}
\end{equation}
We do the same for the discounted forward variance, hence we obtain
\begin{align*}
d\tilde{\xi_{t}} & =dB_{t}^{-1}\xi_{t}+B_{t}^{-1}d\xi_{t}\\
 & =-r_{t}B_{t}^{-1}\xi_{t}dt+B_{t}^{-1}\eta e^{-\kappa\left(T-t\right)}\sqrt{\nu_{t}}dW_{t}^{2}\\
 & =B_{t}^{-1}\left(-r_{t}\xi_{t}dt+\eta e^{-\kappa\left(T-t\right)}\sqrt{\nu_{t}}\left[dW_{t}^{\mathbb{Q},2}+\gamma_{t}^{2}dt\right]\right)\\
 & =B_{t}^{-1}\left(\left[\eta e^{-\kappa\left(T-t\right)}\sqrt{\nu_{t}}\gamma_{t}^{2}-r_{t}\xi_{t}\right]dt+\eta e^{-\kappa\left(T-t\right)}\sqrt{\nu_{t}}dW_{t}^{\mathbb{Q},2}\right),
\end{align*}
therefore the discounted variance swap $\tilde{\xi_{t}}$, is a $\mathbb{Q}-$martingale
if and only if
\begin{equation}
\gamma_{t}^{2} = \frac{\xi_{t}r_{t}}{\eta e^{-\kappa\left(T-t\right)}\sqrt{\nu_{t}}}.\label{eq: FwdVarSwap_Q-Martingale Condition-1}
\end{equation}
Finally, we impose that the discounted zero-coupon bond price process
is a $\mathbb{Q}$-martingale in an analogous computation,
\begin{align*}
d\tilde{P}_{t,T} & =dB_{t}^{-1}P_{t,T}+B_{t}^{-1}dP_{t,T}\\
 & =-r_{t}B_{t}^{-1}P_{t,T}+B_{t}^{-1}P_{t,T}\left[-\left(B\left(t,T\right)k\left(\theta-r_{t}\right)-\frac{1}{2}B\left(t,T\right)^{2}\sigma^{2}\right)dt-\sigma B\left(t,T\right)dW_{t}^{0}\right]\\
 & =-\tilde{P}_{t,T}\left(\left(r_{t}+B\left(t,T\right)k\left(\theta-r_{t}\right)-\frac{1}{2}B\left(t,T\right)^{2}\sigma^{2}\right)dt+\sigma B\left(t,T\right)dW_{t}^{0}\right)\\
 & =-\tilde{P}_{t,T}\left(r_{t}+\sigma B\left(t,T\right)\gamma_{t}^{0}+B\left(t,T\right)k\left(\theta-r_{t}\right)-\frac{1}{2}B\left(t,T\right)^{2}\sigma^{2}\right)dt
-\tilde{P}_{t,T}\left(\sigma B\left(t,T\right)dW_{t}^{\mathbb{Q},0}\right),
\end{align*}
therefore the discounted zero-coupon bond is a $\mathbb{Q}$-martingale
if and only if
\begin{equation}
\frac{-1}{1-B\left(t,T\right)k}\left(\sigma B\left(t,T\right)\gamma_{t}^{0}+B\left(t,T\right)k\theta-\frac{1}{2}B\left(t,T\right)^{2}\sigma^{2}\right)=r_{t}.\label{eq: ZCbond_Q-Martingale_Condition-1}
\end{equation}
The result follows, solving the linear system formed by equations $\left(\ref{eq: Price_Q-Martingale_Condition-1}\right)$,
$\left(\ref{eq: FwdVarSwap_Q-Martingale Condition-1}\right)$ and
$\left(\ref{eq: ZCbond_Q-Martingale_Condition-1}\right)$.
\end{proof}


\section{Model implementation and examples}\label{implementation}

In this section, we present an implementation of the Heston model written
in forward variance together with a Vasicek model for the interest
rates, in order to price numerically a unit-linked product. We will
implement a Monte Carlo scheme for simulating prices under this model
and compare it against a classical Black-Scholes model. The Heston
process will be simulated using a full-truncation scheme \cite{Ander07} in the Euler
discretization on both models. We first show the discretized versions
of the SDE's for each model and the result of the model comparison
given some initial conditions.

Let $N\in\mathbb{N}$ be the number of time steps in which the interval
$\left[0,T\right]$ is equally divided. Then consider the uniform
time grid $t_{k}\triangleq\left(kT\right)/N,$ for all $k=1,\ldots,N$ of length
$\Delta t=T/N.$ We present the following Euler schemes for each model.
\begin{enumerate}
\item Classical Black Scholes
\[
S_{t_{k+1}}=S_{t_{k}}\exp\left(\left(r_0-\frac{1}{2}\bar{\nu}^{2}\right)\Delta t+\bar{\nu}\sqrt{\Delta t}\left(W_{1}^{\mathbb{Q}}\left(t_{k+1}\right)-W_{1}^{\mathbb{Q}}\left(t_{k}\right)\right)\right),
\]
where the parameters for the simulation are $\left(S_{0},r,\bar{\nu}\right),$
given by: $S_{0}=100,$ $r_0=0.01,$ $\bar{\nu}=0.04$.
\item Vasicek-Heston Model written in forward variance 
\begin{align*}
r_{t_{k+1}} & =r_{t_{k}}+\Big[k\Big(\theta-\left(r_{t_{k}}\right)^{+}\Big)+ \sigma\gamma_{0}\left(t_{k}\right)\Big]\Delta t
+\sigma\sqrt{\Delta t}\left(W_{0}^{\mathbb{Q}}\left(t_{k+1}\right)-W_{0}^{\mathbb{Q}}\left(t_{k}\right)\right),\\
\xi_{t_{k+1}}\left(t_{N}\right) & =\xi_{t_{k}}\left(t_{N}\right)+r_{t_{k}}\xi_{t_{k}}\left(t_{N}\right)\Delta t+\eta e^{-\kappa\left(t_{N}-t_{k}\right)}\sqrt{\left(\nu_{t_{k}}\right)^{+}\Delta t}\left(W_{2}^{\mathbb{Q}}\left(t_{k+1}\right)-W_{2}^{\mathbb{Q}}\left(t_{k}\right)\right),\\
\nu_{t_{k+1}} & =\bar{\nu} + e^{\kappa\left(t_{N}-t_{k+1}\right)}\left(\xi_{t_{k+1}}\left(t_{N}\right)-\bar{\nu}\right),\\
S_{t_{k+1}} & = S_{t_{k}}+ r_{t_{k}}S_{t_{k}}\Delta t
+S_{t_{k}}\sqrt{\left(\nu_{t_{k}}\right)^{+}\Delta t}\left(W_{1}^{\mathbb{Q}}\left(t_{k+1}\right)-W_{1}^{\mathbb{Q}}\left(t_{k}\right)\right),\\
P\left(t_{k+1},t_{N}\right) & =P\left(t_{k},t_{N}\right)+P\left(t_{k},t_{N}\right)\left[r_{t_{k}}\Delta t-\sigma B\left(t_{k},t_{N}\right)\sqrt{\Delta t}\left(W_{0}^{\mathbb{Q}}\left(t_{k+1}\right)-W_{0}^{\mathbb{Q}}\left(t_{k}\right)\right)\right].
\end{align*}where the parameters are $\left(S_{0},\mu,\nu_{0},\bar{\nu},\kappa,\eta,r_{0},\theta,k,\sigma\right)$
and were set as $S_{0}=100,$ $\mu=0.015,$ $\bar{\nu}=0.01,$ $\nu_{0}=0.04,$ $\kappa=10^{-3},$ $\eta=0.01,$ 
$\theta=r_{0}=0.01,$ $k=0.3,$ and $\sigma=0.02$.
\end{enumerate}
For simulation purposes, the Monte Carlo scheme was implemented using
5,000 simulations. The following graphs in Figure \ref{plot: model_errors}, results from the implementation
of the previous models with the mentioned initial conditions, and
for $T=\left\{ 10,20,30,40\right\} .$

\begin{figure}
\begin{centering}
\includegraphics[width=0.49\textwidth]{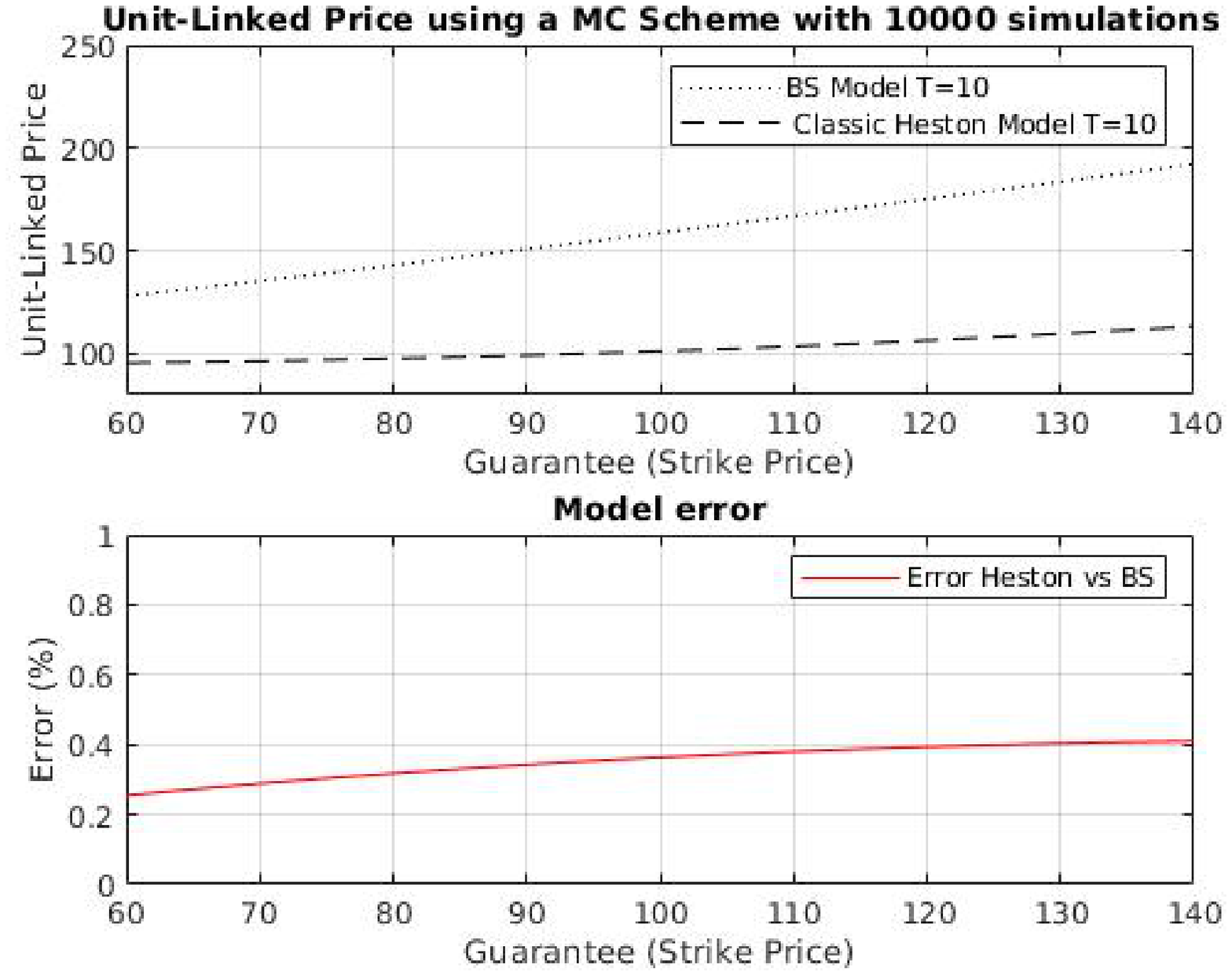}\includegraphics[width=0.49\textwidth]{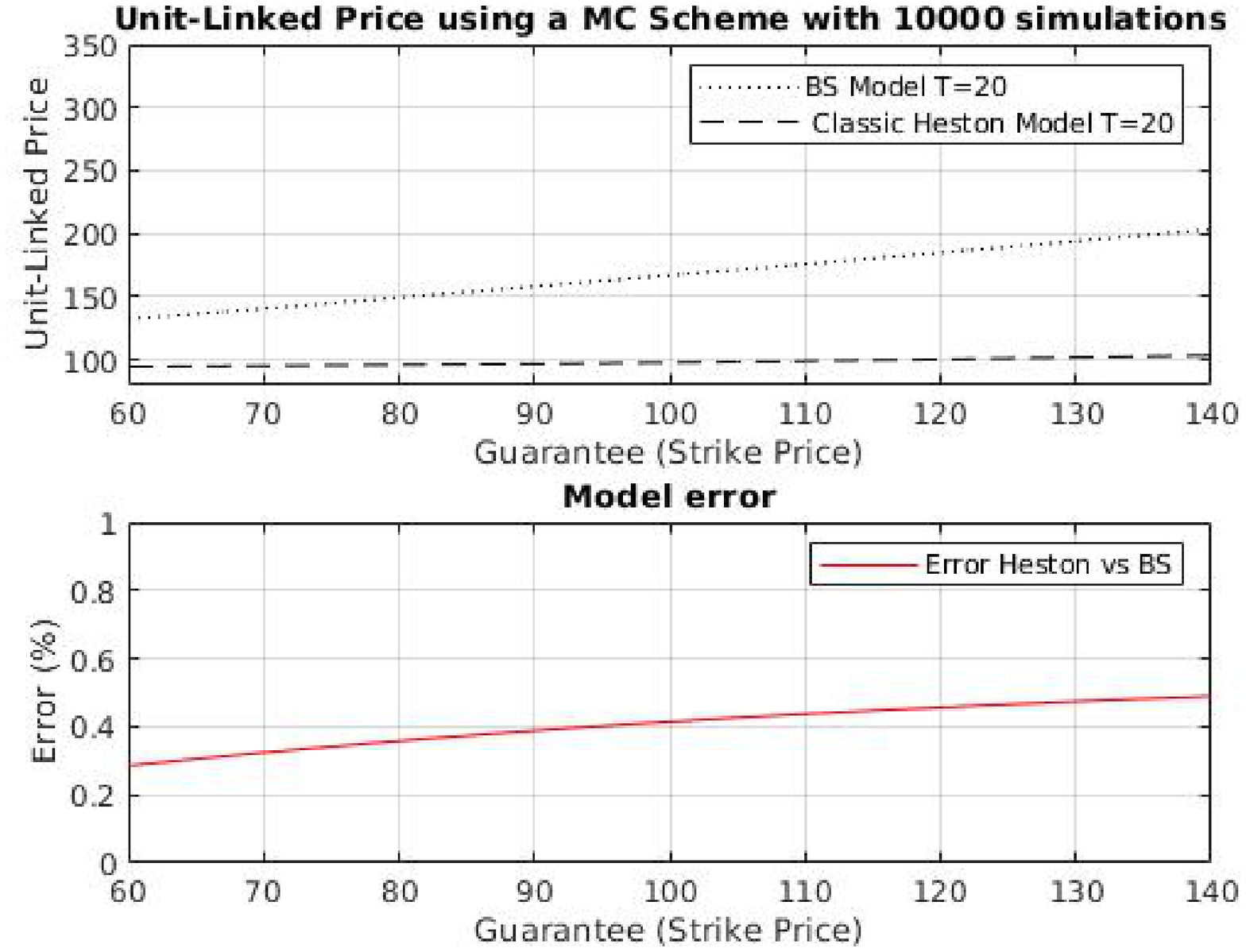}
\par\end{centering}
\includegraphics[width=0.49\textwidth]{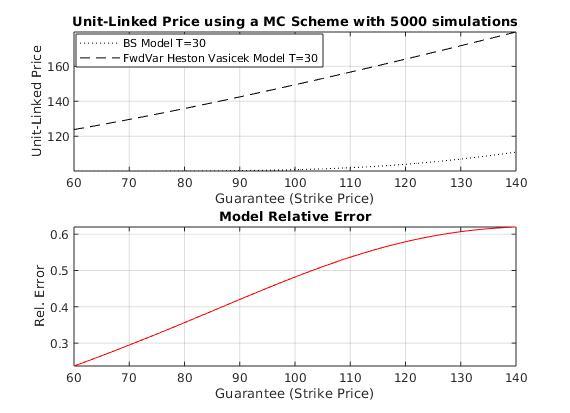}\includegraphics[width=0.49\textwidth]{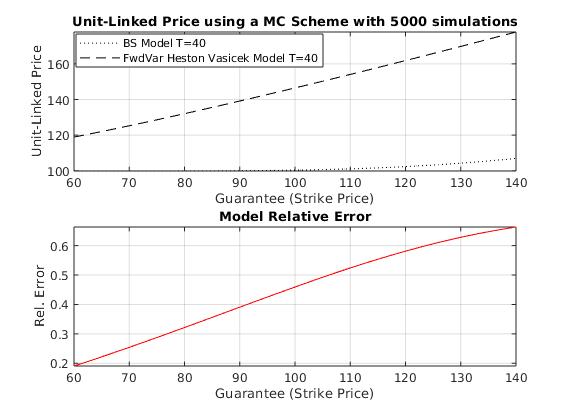}

\caption[Model error]{Pricing comparative between the Black-Scholes model and a Vasicek-Heston
model written in forward variance with the mentioned initial conditions
and $T\in\left\{10,20,30,40\right\} $.}
\label{plot: model_errors}
\end{figure}
\newpage{}

As shown in Theorem $\ref{thm: Generalized_UL-Pricing_Thm}$ and Corollary
$\ref{cor: Vasicek-Heston_Pricing}$, in order to properly price a
unit-linked product, it only remains to multiply the value of the
derivative priced using the Monte Carlo scheme, times the probability that an
$x$-year old insured survives during the life of the product ($T$
years). To do so, we have used Norwegian mortality from 2018 extracted from Statistics Norway.

\begin{table}[H]
\begin{center}
\begin{tabular}{ l | l l | l}
 Age & Men & Women & Total\\ \hline 
 4 & 50 & 45 & 95\\
 9 & 7 & 2 & 9\\
 14 & 10 & 3 & 13\\
 19 & 26 & 13 & 39\\
 24 & 33 & 6 & 39\\
 29 & 63 & 24 & 87\\
 34 & 72 & 27 & 99\\
 39 & 93 & 43 & 136\\
 44 & 109 & 68 & 177\\
 49 & 156 & 111 & 267\\
 54 & 258 & 177 &435 \\
 59 & 454 & 310 & 764\\
 64 & 737 & 495 & 1232\\
 69 & 1206 & 824 & 2030\\
 74 & 1990 & 1331 & 3321\\
 79 & 3602 & 2447 & 6049\\
 84 & 6626 & 4628 & 11254\\
 89 & 12469 & 9053 & 21522\\
 $\geq $90 & 21 909 & 24 230 & 46139
\end{tabular}
\end{center}
\caption{Norwegian mortality in 2018, per 100 000 inhabitants. Data from Statistics Norway, table: 05381}
\label{mortalityNorway}
\end{table}
As it is usual, mortality among men is higher, we consider however, the aggregated mortality for simplicity. To model the mortality given in Table \ref{mortalityNorway} we use the Gompertz-Makeham law of mortality which states that the death rate is the sum of an age-dependent component, which increases exponentially with age, and an age-independent component, i.e. $\mu_{\ast\dag}(t) = a+be^{ct}$, $t\in\left[0,T\right]$. This law of mortality describes the age dynamics of human mortality rather accurately in the age window from about 30 to 80 years of age, which is good enough for our purposes. For this reason, we excluded the very first and last observations from the table. We then find the best fit for $\mu_{\ast\dag}$ in the class of functions $\mathcal{C}=\{f(t)=a+be^{ct}, t\in\left[0,T\right], a,b,c\in \mathbb{R}\}$. As stated previously, since the stochastic process $X=\left\{ X_{t}\right\} _{t\in\left[0,T\right]}$, which regulates the states of the insured, is a regular Markov chain, then the survival probability of an $x$-year old individual during
the next $T$ years is
\[
_{T}p_{x}=\bar{p}_{\ast\ast}\left(x,x+T\right)=\exp\left(-\int_{x}^{x+T}\mu_{\ast\dagger}\left(\tau\right)d\tau\right).
\]

Figure \ref{Figure: MortalityRates} shows the fitted Gompertz-Makeham law based on the mortality data from Table \ref{mortalityNorway}.

\begin{figure}
\begin{center}
\includegraphics[width=0.8\textwidth]{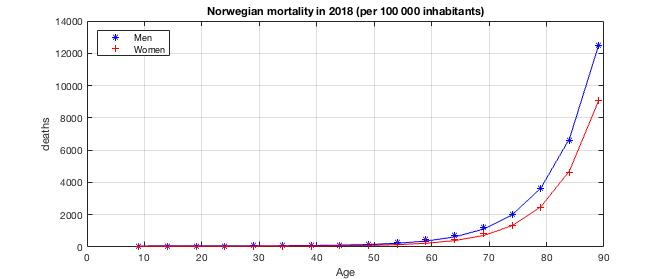}
\end{center}
\caption[Mortality Graph]{Joint plot of the mortality data given in Table \ref{mortalityNorway}, together with the fitted curve using the Gompertz-Makeham law of mortality.}
\label{Figure: MortalityRates}
\end{figure}

Now, using the Vasicek-Heston model written in forward variance, we
can compute a unit-linked price surface in terms of the guarantee,
or strike price, and the age of the insured given a terminal time
for the product $T>0$. In particular, the graphs below show the price
surfaces for fixed $T=\left\{ 10,20,30,40\right\} $. 

\begin{figure}
\begin{centering}
\includegraphics[width=0.49\textwidth]{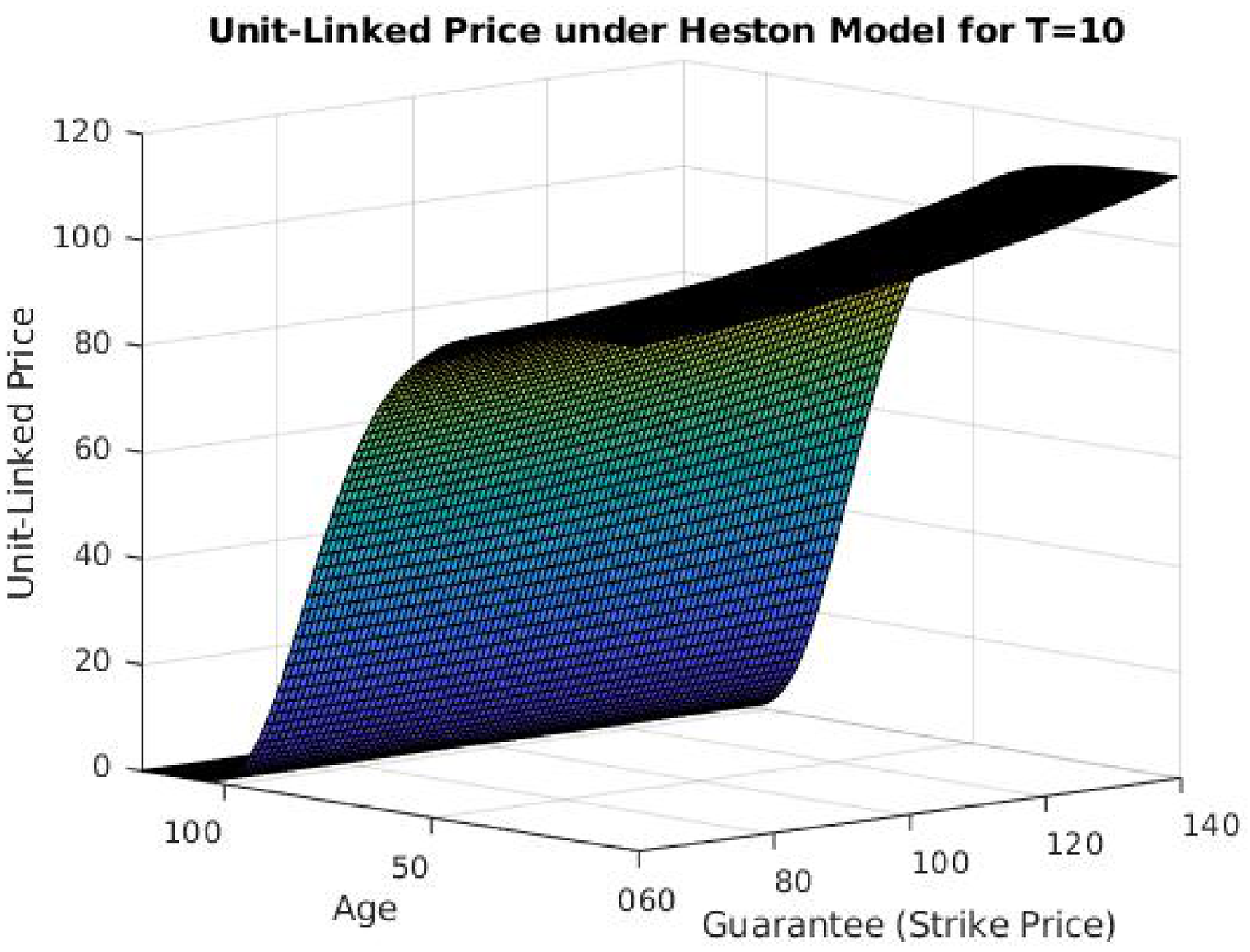}\includegraphics[width=0.49\textwidth]{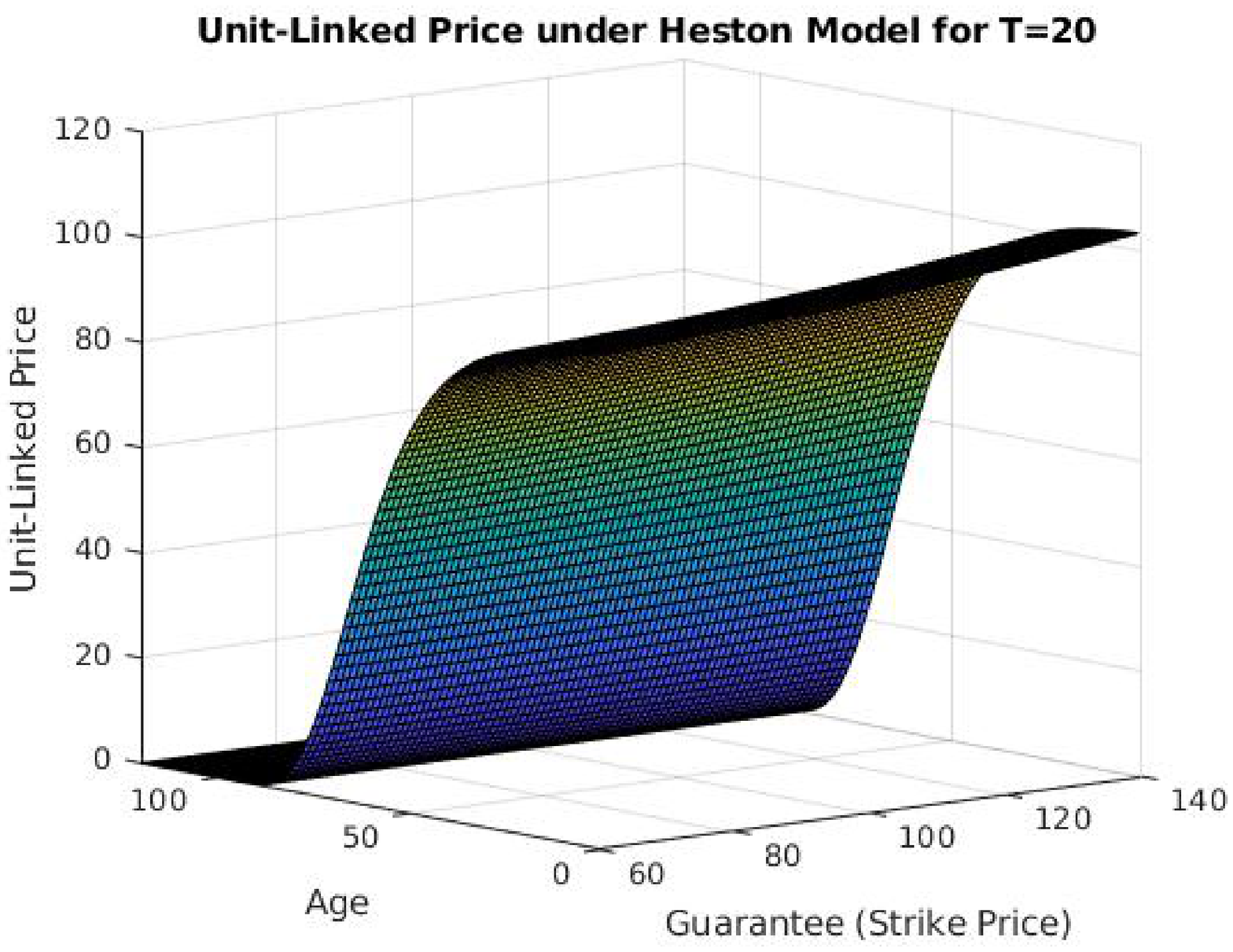}
\par\end{centering}
\includegraphics[width=0.49\textwidth]{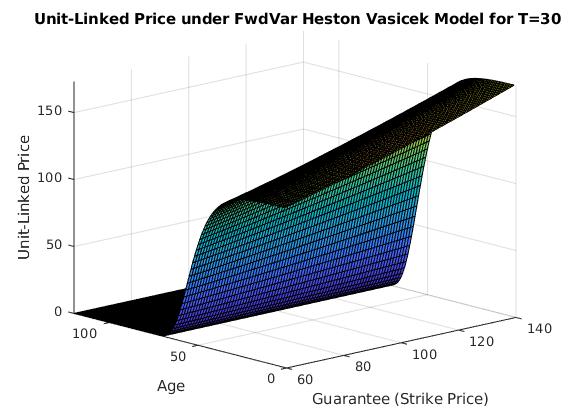}\includegraphics[width=0.49\textwidth]{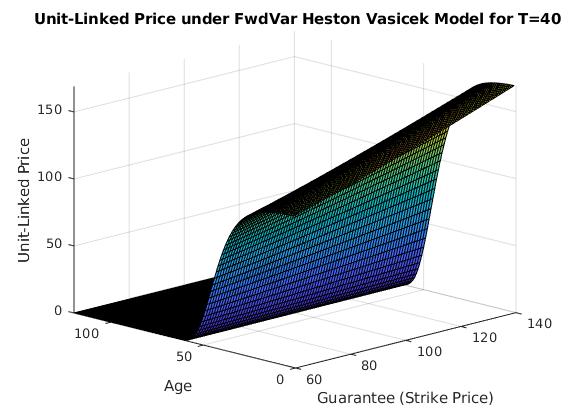}

\caption[Price surfaces]{unit-linked price surfaces under a Vasicek-Heston model written in
forward variance for different policy maturities, in terms of the
guaranteed amount desired by the insured and his age at the time of
acquisition.}
\label{plot: price surfaces}
\end{figure}
From the plots in Figure \ref{plot: price surfaces}, we can observe that the longer time to maturity
is, the lower the unit-linked price is, since the less probable it is that the
insured survives. This effect has greater impact in the price, than
the effect of future volatility, or uncertainty arising from the stochasticity in interest rates. 
This behavior is easily observed by noting how the price surface collapses to zero
as the contract's time to maturity increases, as well as the age of
the insured when entering the contract. Hence, we can say that time to maturity has a cancelling effect on
price, i.e. on one hand it increases price as the stock or fund pays
longer performance, but on the other hand it decreases price due to
a lower probability of surviving during the time to maturity of the
unit-linked contract.

The following plots in Figures \ref{Figure: Price_Histogram} and \ref{Figure: QQPlots}, are aimed at providing the reader with an overview of the distributional properties of the price process at a constant survival rate equal to one. The first thing that comes to sight, is how the variance and time to maturity are directly proportional. Also, the longer the time to maturity of the unit-linked product is, the more leptokurtic the distribution of the insurance product price is. This is an important thing to take into account in the modeling of prices due to the impact in the hedging of such insurance products. 

\begin{figure}[H]
\begin{center}
\includegraphics[width=0.8\textwidth]{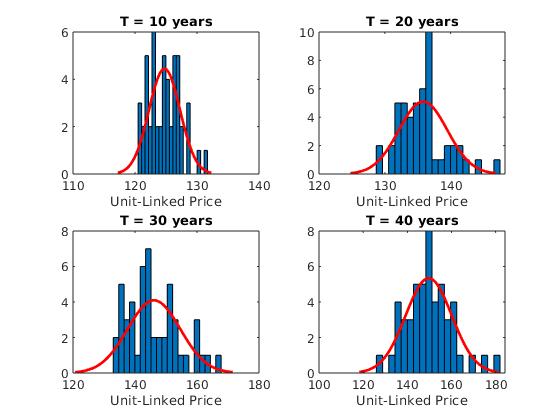}
\end{center}
\caption[UL-Histogram]{unit-linked price histograms with constant survival rate equal to 1.}
\label{Figure: Price_Histogram}
\end{figure}
\begin{figure}[H]
\begin{center}
\includegraphics[width=0.8\textwidth]{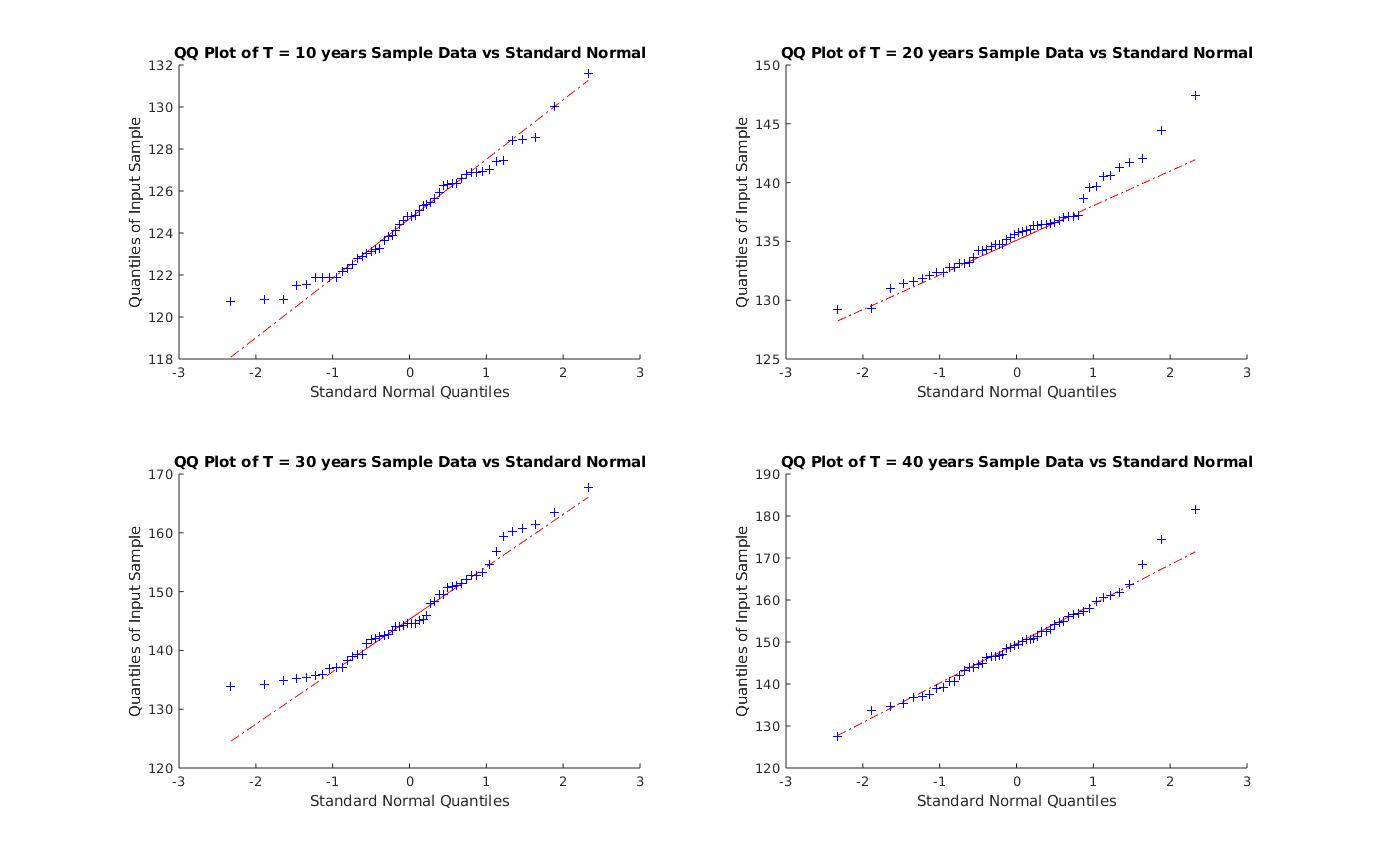}
\end{center}
\caption[QQ-Plot]{QQ-Plot between the unit-linked price input data and the Standard Normal Distribution for maturities $T=\{10,20,30,40\}$ years.}
\label{Figure: QQPlots}
\end{figure}

\subsection{Pure endowment}
Consider an endowment for a life aged $x$ with maturity $T>0$. The policy pays the amount $E_T:=\max\{G_e, S_T\}$ if the insured survives by time $T$ where $G_e>0$ is a guaranteed amount and $S_T$ is the value of a fund at the expiration time. This policy is entirely determined by the policy function
$$a_\ast(t) = \begin{cases} E_T \mbox { if } t\geq T\\ 0 \mbox{ else}\end{cases}.$$

In view of \eqref{reserve2} and the above function, the value of this insurance at time $t$ given that the insured is still alive is then given by
\begin{align}\label{reserve3}
V_\ast^+(t,A)= \mathbb{E}^{\mathbb{Q}}\left[\int_t^T \frac{B_t}{B_s}p_{\ast\ast}(x+t,x+s) da_\ast(s)\Big|\mathcal{G}_t\right]=\mathbb{E}^{\mathbb{Q}}\left[\frac{B_t}{B_T} E_T\Big|\mathcal{G}_t\right] p_{\ast\ast}(x+t,x+T),
\end{align}
The above quantity corresponds to the formula in Theorem \ref{thm: Generalized_UL-Pricing_Thm}.

Observe that, the payoff of an endowment can be written as
$$\max\{G_e,S_T\} = (G_e-S_T)_+ + G_e,$$
where $(x)_+ \triangleq \max\{x,0\}$, which corresponds to a call option with strike price $G_e$ plus $G_e$. In the case that $S$ is modelled by the Black-Scholes model (with constant interest rate) we know that the price at time $t$ of a call option with strike $G_e$ and maturity $T$ is given by
$$BS(t,T,S_t,G_e)\triangleq\Phi (d_1(t,T))S_t -\Phi (d_2(t,T))G_e e^{-r(T-t)} ,$$
where $\Phi$ denotes the distribution function of a standard normally distributed random variable and
$$d_1(t,T) \triangleq \frac{\log (S_t/G_e) +\left(r+\frac{1}{2}\sigma^2\right)(T-t)}{\sigma\sqrt{T-t}}, \quad d_2(t,T) \triangleq d_1(t,T) - \sigma\sqrt{T-t}.$$

Then we have that the unit-linked pure endowment under the Black-Scholes model has the price
\begin{align}\label{BSpure}
BSE(t,T,S_t,G_e)\triangleq \Phi(d_1(t,T))S_t + G_e e^{-r(T-t)} \Phi(-d_2(t,T)).
\end{align}

%

The single premium at the beginning of this contract under the Black-Scholes model is then
$$\pi_{BS}^0 \triangleq BSE(0,T,S_0,G_e).$$

It is also possible to compute yearly premiums by introducing payment of yearly premiums $\pi_{BS}$ in the policy function $a_{\ast}$, i.e. $a_\ast(t)=-\pi_{BS} t$ if $t\in [0,T)$ and $a_\ast(t) = -\pi_{BS} T+E_T$ if $t\geq T$, then the value of the insurance at any given time $t\geq 0$ with yearly premiums, denoted by $V_\ast^\pi$, becomes
$$-\pi_{BS}\int_t^T e^{-r(s-t)}p_{\ast\ast}(x+t,x+s)ds+BSE(t,T,S_t,G_e).$$
We choose the premiums in accordance with the equivalence principle, i.e. such that the value today is $0$,
$$\pi_{BS}= \frac{BSE(0,T,S_0,G_e)}{\int_0^T e^{-rs}p_{\ast\ast}(x,x+s)ds}.$$

Under the Vasicek-Heston model instead, the value of policy at time $t\geq 0$ with yearly premiums $\pi_{VH}$ is
\begin{align*}
V_\ast^+(t,A) &= \mathbb{E}^{\mathbb{Q}}\left[\int_t^T \frac{B_t}{B_s}p_{\ast\ast}(x+t,x+s) da_\ast(s)\Big|\mathcal{G}_t\right]\\
&=-\pi_{VH}\int_t^T \mathbb{E}^{\mathbb{Q}}\left[\frac{B_t}{B_s}\Big|\mathcal{G}_t\right]p_{\ast\ast}(x+t,x+s)ds+ \mathbb{E}^{\mathbb{Q}}\left[\frac{B_t}{B_T} E_T\Big|\mathcal{G}_t\right] p_{\ast\ast}(x+t,x+T).
\end{align*}

A single premium payment $\pi_{VH}^0$ corresponds to $V_\ast^+(0,A)$, i.e.
$$\pi_{VH}^0 = \mathbb{E}^{\mathbb{Q}}\left[\frac{E_T}{B_T}\right] p_{\ast\ast}(x,x+T)$$
and the yearly ones correspond to
$$\pi_{VH} = \frac{V_\ast^+(0,A)}{\int_0^T \mathbb{E}^{\mathbb{Q}}\left[\frac{1}{B_s}\right]p_{\ast\ast}(x,x+s)ds}.$$

In Figure \ref{figure:pure_endowment} we compare the single premiums using the classical Black-Scholes unit-linked model in contrast to the Vasicek-Heston model proposed for different maturities $T$ with parameters $S_0=1$, $G_e=1$, $r=1\%$ and $\mu=1.5\%$, $\sigma=4\%$ for the Black-Scholes model, and $S_{0}=1,$, $G_e=1$, $\mu=1.5\%,$ $\bar{\nu}=1\%,$ $\nu_{0}=4\%,$ $\kappa=10^{-3},$ $\eta=10^{-2},$ 
$\theta=r_{0}=1\%,$ $k=0.3,$ $\sigma=2\%$ for the Vasicek-Heston model.
\begin{figure}
\centering
  \includegraphics[scale=0.5]{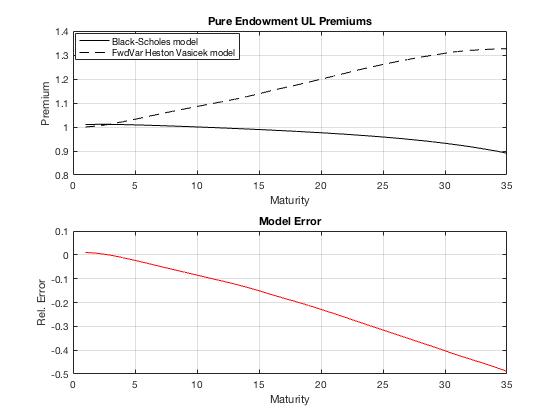}
\caption{Single premiums for a pure endowment with benefit equal to $1$ monetary unit, using the classical Black-Scholes with constant interest $r=1\%$ and $\mu=1.5\%$, $\sigma=4\%$ and a Vasicek-Heston model with parameters $S_{0}=1,$ $\mu=1.5\%,$ $\bar{\nu}=1\%,$ $\nu_{0}=4\%,$ $\kappa=10^{-3},$ $\eta=10^{-2},$ 
$\theta=r_{0}=1\%,$ $k=0.3,$ $\sigma=2\%$.}\label{figure:pure_endowment}
\end{figure}

\subsection{Endowment with death benefit}
Consider now an endowment for a life aged $x$ with maturity $T>0$ that pays, in addition, a death benefit in case the insured dies within the period of the contract. That is the policy pays the amount $E_T:=\max\{G_e, S_T\}$ if the insured survives by time $T$ as before and, in addition, a death benefit of $D_t :=\max\{G_d, S_t\}$ if $t\in [0,T)$. This policy is entirely determined by the two policy functions
$$a_\ast(t) = \begin{cases} E_T \mbox { if } t\geq T\\ 0 \mbox{ else}\end{cases},\quad  a_{\ast\dag}(t) = \begin{cases} D_t \mbox { if } t\in [0,T)\\ 0 \mbox{ else}\end{cases}.$$

In view of \eqref{reserve2} and the above functions, the value of this insurance at time $t$ given that the insured is still alive is then given by
\begin{align}\label{reserve4}
V_\ast^+(t,A)= \mathbb{E}^{\mathbb{Q}}\left[\frac{B_t}{B_T} E_T\Big|\mathcal{G}_t\right] p_{\ast\ast}(x+t,x+T) + \int_t^T \mathbb{E}^{\mathbb{Q}}\left[\frac{B_t}{B_s} D_s\Big| \mathcal{G}_t\right] p_{\ast\ast}(x+t,x+s)\mu_{\ast\dag}(x+s) ds.
\end{align}

Following similar arguments as in the case of a pure endowment, by adding the function $a_{\ast\dag}$ in the computations, we obtain that the single premiums $\pi_{BS}^0$ and $\pi_{VH}^0$ for the Black-Scholes model and Vasicek-Heston model, respectively, are given by.

$$\pi_{BS}^0 = BSE(0,T,S_0,G_e) + \int_0^T e^{-r s} BSE(0,s,S_0,G_d) p_{\ast\ast}(x,x+s)\mu_{\ast\dag}(x+s)ds,$$
where the function $BSE$ is given in \eqref{BSpure}, and
$$\pi_{VH}^0 = \mathbb{E}^{\mathbb{Q}}\left[\frac{E_T}{B_T}\right] p_{\ast\ast}(x,x+T) + \int_0^T \mathbb{E}^{\mathbb{Q}}\left[\frac{D_s}{B_s}\right] p_{\ast\ast}(x,x+s)\mu_{\ast\dag}(x+s) ds..$$

In Figure \ref{figure:death_benefit} we compare the single premiums using the classical Black-Scholes unit-linked model in contrast to the Vasicek-Heston model proposed for different maturities $T$ with parameters $S_0=1$, $G_e=G_d=1$, $r=1\%$ and $\mu=1.5\%$, $\sigma=4\%$ for the Black-Scholes model, and $S_{0}=1,$, $G_e=1$, $\mu=1.5\%,$ $\bar{\nu}=1\%,$ $\nu_{0}=4\%,$ $\kappa=10^{-3},$ $\eta=10^{-2},$ 
$\theta=r_{0}=1\%,$ $k=0.3,$ $\sigma=2\%$ for the Vasicek-Heston model.

\begin{figure}[H]
\centering
  \includegraphics[scale=0.5]{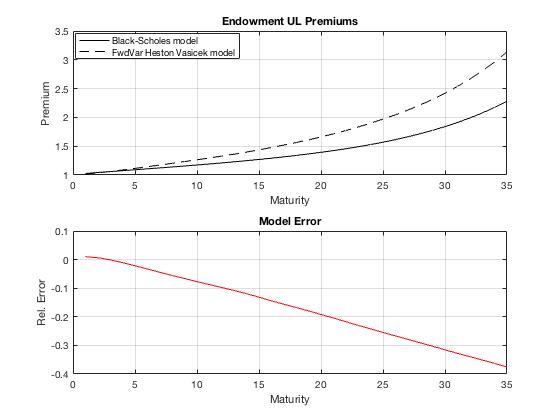}
\caption{Single premiums for an endowment with benefits equal to $1$ monetary unit, using the classical Black-Scholes with constant interest $r=1\%$ and $\mu=1.5\%$, $\sigma=4\%$ and a Vasicek-Heston model with parameters $S_{0}=1,$ $\mu=1.5\%,$ $\bar{\nu}=1\%,$ $\nu_{0}=4\%,$ $\kappa=10^{-3},$ $\eta=10^{-2},$ 
$\theta=r_{0}=1\%,$ $k=0.3,$ $\sigma=2\%$.}\label{figure:death_benefit}
\end{figure}

\bibliography{BibBLO2020}

\bibliographystyle{plain}

\end{document}